\newtheorem{te}{Theorem}
\newtheorem{lem}{Lemma}
\newcommand{\pa}{\partial}
\newtheorem{prop}{Proposition}
\newtheorem{de}{Definition}
\newcommand{\bbf}[1]{\mbox{\boldmath{$#1$}}}
\providecommand{\keywords}[1]
{\small	\textbf{\textit{Keywords:}} #1}
\author{A. M. Escobar-Ruiz and R. Azuaje\\
Departamento de F\'{i}sica, Universidad Aut\'onoma Metropolitana Unidad Iztapalapa\\ San Rafael Atlixco 186, 09340 Cd. Mx., M\'exico}
\title{On particular integrability in classical mechanics}
\begin{document}
\maketitle

\begin{abstract}
In this study the notion of particular integrability in Classical Mechanics, introduced in [J. Phys. A: Math. Theor. 46 025203, 2013], is revisited within the formalism of symplectic geometry. A particular integral $\cal I$ is a function not necessarily conserved in the whole phase space $T^*Q$ but when restricted to a certain invariant subspace ${\cal W}\subseteq   T^*Q$ it becomes a Liouville first integral. For natural Hamiltonian systems, it is demonstrated that such a function $\cal I$ allows us to construct a lower dimensional Hamiltonian in $\cal W$. This symmetry reduction is intimately related with a phenomenon beyond separation of variables and it is based on an adaptive application of the classical results due to Lie and Liouville on integrability. Three physically relevant systems are used to illustrate the underlying  key aspects of the symplectic theory approach to particular integrability: (I) the integrable central-force problem, (II) the chaotic two-body Coulomb system in a constant magnetic field as well as (III) the $N$-body system.

\keywords{Hamiltonian systems, particular integrability, symplectic geometry, Lie's theorem, integrals of motion.}

\end{abstract}

\section{Introduction}
\label{sec1}

Since its inception the theory of Hamiltonian systems plays a fundamental role in Theoretical Physics. In particular, it is standard material in any advanced textbook on Classical Mechanics \cite{Landau} where a fruitful relationship to symplectic geometry occurs \cite{Arnold89,MacKay}. Moreover, the Hamiltonian formalism emerges as a natural link between classical and quantum mechanics. For a Hamiltonian system with $n$ degrees of freedom, the existence of $n$ integrals of motion that are almost everywhere independent and that are in involution is required to make it integrable. The property of complete integrability, in the Liouville sense, implies that it is locally equivalent to an action-angle system. Therefore, it is intimately related with the mark of solvability. 
Indeed, integrability provides a regular foliation of the phase space which in turn would allow us to find the solutions, of the corresponding Hamilton's equations of motion, by quadratures in a systematic way. Along these lines, a superintegrable Hamiltonian system admits $k$ extra independent integrals being $k=n-1$ the maximum possible number in which case all the bounded trajectories are periodic \cite{Nek72} and can be obtained, in principle, by pure algebraic means. The notions of integrability as well as that of superintegrability can be defined in quantum and classical mechanics, and there exist several exhaustive review 
articles in literature about the history and current status of this topic \cite{bolsinov2004, BBT2003, HW, MillerPostWinternitz:2013}.

For instance, in the case of a spherically symmetric integrable classical Hamiltonian in $\mathbb{R}^n$ ({\it i.e.}, a potential of the form  $V=V(r)$, being $r$ the radial distance), there exist 
two superintegrable systems only: the celebrated Kepler system (Newtonian gravity) and the simple harmonic oscillator (Hooke's law). In fact, these two central potentials are the ones 
which appear in the Bertrand's theorem \cite{Bertrand1873,GPS2002}. The superintegrability of the Kepler system is due to the 
existence of the conserved Laplace-Runge-Lenz vector \cite{GPS2002, Laplace, Lenz, Runge} whereas in the case of the harmonic oscillator it is a direct consequence of the also conserved quadrupole Jauch-Fradkin tensor \cite{Fradkin}.

Recently, in \cite{AM2021, AM2022} concrete examples where only for specific initial conditions certain quantities become integrals of motion were observed in the study of the physically relevant $3-$ and $4-$body classical system. This pattern realizes the notion of a \textit{particular integral} introduced in \cite{ParIntTurbiner2013} (see also references \cite{TurbinerVieyra2020, TurbinerVieyra2021}). Eventually, if the interaction potential between the bodies solely depends on their relative distances $r_{ij}$, as in the case of the $N-$body problem in Celestial Mechanics, it is possible to construct a universal reduced Hamiltonian in these variables which describes all the trajectories possessing zero total angular momentum \cite{AM2021, AM2022, Miller2018}. In the corresponding derivation, as a key element certain particular integrals were promoted to new generalized coordinates following the spirit of the Hamilton-Jacobi theory. 

Hence, one important reason to study a particular integral is because it can be exploited to reduce the number of degrees of freedom. Accordingly, the task of solving the corresponding set of dynamical equations becomes simpler. 
While all the motion of complete integrable Hamiltonian systems, assuming that the regular level set of the invariants is compact, takes place in an invariant torus, for the dynamics of a non-integrable systems there may exist only a part of the total dynamics where the requirements for integrability hold. Therefore, the question on a formalism to describe the latter situation naturally arises.

The main goal of the present study is twofold. Firstly, we aim to establish in a rigorous manner, within the formalism of symplectic geometry, the above-mentioned notion of \textit{particular} integral. It is the basic element in the also revisited concept of \textit{particular integrability}. Secondly, for autonomous Hamiltonian systems we plan to describe the associated reduction of the corresponding set of dynamical equations by virtue of the existence of particular integrals. Especially, to indicate the conditions under which the reduced system of equations is governed by a reduced Hamiltonian. The present modest contribution can be seen as an adaptive application of the classical results due to Lie and Liouville on integrability. 
It is worth mentioning that our approach is related with a phenomenon beyond the standard separation of variables and it can be used to characterize integrable sub-structures of non-integrable Hamiltonian systems. In order to clarify the ideas of the present work, the formalism is applied to physically significant examples such as the integrable central-force problem, the chaotic two-body Coulomb system in a magnetic field, and the $N-$body system.

\subsection{Generalities}

As a first step, let us recall the basics on symplectic geometry and the geometric formulation of Hamiltonian Mechanics (for details see References \cite{AMRC2019,LR89,Torres2020,Lee2012,MR99}).
 
Let $M$ be a smooth manifold. A symplectic structure on $M$ is a closed non-degenerate 2-form $\omega$ on $M$. Closed means that $d\omega=0$ and non-degenerate implies that for each 1-form $\alpha$ on $M$ there is one and only one vector field $X$ which obeys $X\lrcorner\omega=\alpha$. A symplectic manifold is a pair $(M,\omega)$ where $\omega$ is a symplectic structure on $M$. By definition, each symplectic manifold is of even dimension.

Let $(M,\omega)$ be a symplectic manifold of dimension $2n$. Around any point $p\in M$ there exist local coordinates $(q^{1},\cdots,q^{n},p_{1},\cdots,p_{n})$, called canonical coordinates or Darboux coordinates, such that
\begin{equation}
\omega\ = \ dq^{i}\wedge dp_{i}\ .
\end{equation}
In this paper, we adopt the Einstein summation (\textit{i.e.}, a summation over repeated indices is assumed). In Classical Mechanics, the (contravariant) variables $\{ q^{n} \}$ are called generalized coordinates whilst the (covariant) quantities $\{ p_{n} \}$ correspond to their canonical momenta, respectively.   

For each function $f(p,q)\in C^{\infty}(M)$ in the phase space, is assigned a vector field $X_{f}$ on $M$, called the Hamiltonian vector field for $f$, according to
\begin{equation}
X_{f}\lrcorner \omega \ = \ df \ .
\end{equation}
In canonical coordinates $(q,p)$, this vector field $X_{f}$ takes the form
\begin{equation}
X_{f}=\frac{\partial f}{\partial p_{i}}\frac{\partial}{\partial q^{i}}-\frac{\partial f}{\partial q^{i}}\frac{\partial}{\partial p_{i}}\ .
\end{equation}
The assignment $f\longmapsto X_{f}$ is linear, namely
\begin{equation}
X_{f+\alpha \,g}\ = \ X_{f}+\alpha\, X_{g}\ ,
\end{equation}
$\forall\, f,g\in C^{\infty}(M)$ and $\forall\, \alpha \in\mathbb{R}$. For given functions $f,g \in C^{\infty}(M)$, the Poisson bracket of $f$ and $g$ is defined by
\begin{equation}
\label{PB}
\lbrace f,g\rbrace\ = \ X_{g}\,f \ = \ \omega(X_{f},X_{g})\ .
\end{equation}
In canonical coordinates, we have
\begin{equation}
\label{CPB}
\lbrace f,g\rbrace \ = \ \frac{\partial f}{\partial q^{i}}\frac{\partial g}{\partial p_{i}}\,-\,\frac{\partial f}{\partial p_{i}}\frac{\partial g}{\partial q^{i}} \ .
\end{equation}

The geometric theory of conservative Hamiltonian systems, when the Hamiltonian $H$ is not an explicit function of time $ t $, is naturally constructed within the mathematical formalism of symplectic geometry. Given $H\in C^{\infty}(M)$ the dynamics of the Hamiltonian system on $(M,\omega)$ (the phase space) is governed by the Hamiltonian vector field $X_{H}$. In this case, we say that $(M,\omega,H)$ is a Hamiltonian system with $n$ degrees of freedom, and the trajectories of the system $\psi(t)=(q^{1}(t),\cdots,q^{n}(t),p_{1}(t),\cdots,p_{n}(t))$ are the integral curves of $X_{H}$. They satisfy the Hamilton's equations of motion
\begin{equation}
\dot{q^{i}} =\frac{\partial H}{\partial p_{i}}, \hspace{1cm}
\dot{p_{i}} =-\frac{\partial H}{\partial q^{i}}\qquad ;\qquad i=1,2,3,\ldots,n \ .
\end{equation}

The evolution (the temporal evolution) of a function $f\in C^{\infty}(M)$ (a physical observable) along the trajectories of the system is given by
\begin{equation}
\dot{f}= \mathcal{L}_{X_{H}}f=X_{H}f=\lbrace f,H\rbrace \ ,
\end{equation}
where $\mathcal{L}_{X_{H}}f$ is the Lie derivative of $f$ with respect to $X_{H}$. A function $f$ is a global constant of motion of the system (or an integral of motion) if it is constant along all the trajectories of the Hamiltonian system, that is, $f$ is a constant of motion if $\mathcal{L}_{X_{H}}f=0$ (or equivalently $\lbrace f,H\rbrace=0$). Clearly, a global integral is conserved for any set of initial conditions.

The existence of such a global integrals of motion can be used to systematically reduce the number of degrees of freedom. Thus, the task of solving the corresponding Hamilton's equation of motion becomes simpler. For instance, the aforementioned symmetry reduction is exploited in the alternative Hamilton-Jacobi theory which is based on selecting the global integrals of motion as the new momenta. Eventually, the Hamiltonian is transformed to a suitable form such that the Hamilton's equations become directly integrable. 

In the next section, we aim to define quantities which are conserved only for specific initial conditions. Hence, they are not global integrals. Nevertheless, in some invariant manifolds of the phase space they also can be used to reduce the number of degrees of freedom. This will be illustrated using physically relevant applications.

\section{Hamiltonian symmetry reduction using \textit{particular integrals}}
\label{sec2}

\subsection{Particular integral \textit{vs} global integral}

In this Section, within the formalism of symplectic geometry, we present the concept of a particular integral \cite{ParIntTurbiner2013}. Let $(M,\omega,H)$ be a Hamiltonian system with $n$ degrees of freedom (${\rm dim}(M)=2n$). 

For a given $f\in C^{\infty}(M)$, let us consider the level set 
\[ 
M_{f} \ = \ \lbrace\, x\in M: \ f(x)\,=\,0\,\rbrace \ ,
\]
if $0$ is a regular value of $f$ then $M_{f}$ is a smooth submanifold of $M$ of dimension $2n-1$ \cite{Lee2012}.
\begin{de}
We say that a function (and observable) $f\in C^{\infty}(M)$ is a particular integral of $(M,\omega,H)$ if 
\[
X_{H}\,f\ =\ a\,f \qquad(\text{or equivalently}\ \lbrace f,H\rbrace=a\,f) \ , 
\]
for some $a\in C^{\infty}(M)$ such that \[ \lim_{x\to x_{0}} a(x) \ = \ a(x_{0})\in\mathbb{R}\ , \quad \forall 
 \,x_{0}\in M_{f} \ .\]
If so, we say that $a\in C^{\infty}(M)$ is a function with real values on $f=0$.
\end{de}
The simplest case $a=0$, corresponds to the definition of a first integral $f$ in the Liouville sense. In this instance, we say that $f$ is a global integral. It is a conserved quantity independently of the values of the initial conditions. Now, let us suppose that $0$ is a regular value of $f$.
\begin{lem}
If $f$ is a particular integral of $(M,\omega,H)$ then $M_{f}$ is closed under the dynamics of $X_{H}$, i.e., if $\gamma:I\longrightarrow M$ is an integral curve of $X_{H}$ such that $\gamma(t_{0})\in M_{f}$ for some $t_{0}\in I$ then $\gamma(t)\in M_{f}$ for every $t\in I$.
\end{lem}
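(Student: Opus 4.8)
The plan is to show that the function $f$ is identically zero along any integral curve of $X_H$ that starts on $M_f$, using the defining relation $X_H f = a f$ as a linear first-order ODE for the scalar quantity $f(\gamma(t))$. Concretely, let $\gamma : I \longrightarrow M$ be an integral curve of $X_H$ with $\gamma(t_0) \in M_f$, and define the real-valued function $\varphi : I \longrightarrow \mathbb{R}$ by $\varphi(t) = f(\gamma(t))$. Differentiating along the flow gives $\dot{\varphi}(t) = (X_H f)(\gamma(t)) = a(\gamma(t))\, f(\gamma(t)) = b(t)\, \varphi(t)$, where $b(t) := a(\gamma(t))$. Thus $\varphi$ solves the linear homogeneous scalar ODE $\dot{\varphi} = b(t)\,\varphi$ with initial condition $\varphi(t_0) = f(\gamma(t_0)) = 0$.

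The next step is to invoke uniqueness for this ODE. The coefficient $b(t) = a(\gamma(t))$ is continuous on $I$: indeed $\gamma$ is smooth, and although $a$ is only assumed to extend continuously (with real values) to points of $M_f$, the composition $t \mapsto a(\gamma(t))$ is at least continuous in a neighbourhood of any $t$ with $\gamma(t) \in M_f$ by the limit hypothesis in the definition, and smooth elsewhere; in either case $b$ is locally bounded, so the standard Picard--Lindelöf / Grönwall argument applies. Since $\varphi \equiv 0$ is a solution of $\dot{\varphi} = b(t)\varphi$ with the same initial value, uniqueness forces $\varphi(t) = 0$ for all $t \in I$. Equivalently, one can write the explicit bound $|\varphi(t)| \le |\varphi(t_0)| \exp\!\bigl(\bigl|\int_{t_0}^{t} b(s)\,ds\bigr|\bigr) = 0$. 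Hence $f(\gamma(t)) = 0$, i.e., $\gamma(t) \in M_f$, for every $t \in I$, which is precisely the claimed invariance.

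I would also remark that no regularity of the value $0$ is actually needed for this invariance statement — the regularity assumption is only there so that $M_f$ is a genuine submanifold, which is the setting in which the subsequent reduction is carried out; the closedness under the flow is a purely dynamical consequence of $X_H f = a f$.

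The main obstacle, and the only subtle point, is the regularity of the coefficient $b(t) = a(\gamma(t))$: since $a$ is a priori only guaranteed to be finite and continuous on $M_f$ itself (not smooth near it), one must be slightly careful that the ODE uniqueness theorem still applies on the whole interval $I$. This is handled by noting that it suffices to rule out $\varphi$ becoming nonzero; if $\varphi(t_1) \ne 0$ for some $t_1$, then by continuity $\gamma$ stays off $M_f$ on a maximal subinterval where $a$ is smooth, and one propagates the zero initial condition up to the boundary of that interval by Grönwall, reaching a contradiction. So the estimate-based argument sidesteps any need for global smoothness of $a$.
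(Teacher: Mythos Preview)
Your argument is correct, and it differs from the paper's own proof in a meaningful way. The paper also derives the scalar relation $\frac{d}{dt}f(\gamma(t)) = a(\gamma(t))\,f(\gamma(t))$, but instead of invoking ODE uniqueness it observes inductively that every higher derivative $\frac{d^k}{dt^k}f(\gamma(t))$ is again a multiple of $f(\gamma(t))$, so that all derivatives vanish at $t_0$, and then concludes $f(\gamma(t))\equiv 0$ by appealing to \emph{analyticity} of $t\mapsto f(\gamma(t))$. Your Gr\"onwall/Picard--Lindel\"of route is both shorter and strictly more general: it needs only that $b(t)=a(\gamma(t))$ be locally bounded (which is automatic if $a\in C^\infty(M)$ as literally stated, and still follows from the limit hypothesis otherwise), whereas the paper's proof imports an additional analyticity assumption that is not part of the original setup. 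Your closing remarks about the regularity of $a$ are more cautious than necessary under the paper's stated hypothesis $a\in C^\infty(M)$, but they do no harm and would be relevant if one weakens that hypothesis to the more natural ``$\{f,H\}/f$ extends continuously across $M_f$'' reading.
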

\begin{proof}
Let $\gamma:I\longrightarrow M$ be an integral curve of $X_{H}$ such that $\gamma(t_{0})\in M_{f}$ for some $t_{0}\in I$, so we have $f(\gamma(t_{0}))=0$ and
\begin{equation}
\begin{split}
\frac{d}{dt}f(\gamma(t))&=df_{\gamma(t)}(\dot{\gamma}(t))\\
&=(\dot{\gamma}(t))(f)\\
&=(X_{H}f)(\gamma(t))\\
&=(af)(\gamma(t))\\
&=a(\gamma(t))f(\gamma(t)) \ ,
\end{split}
\end{equation}
hence, $\frac{d}{dt}f(\gamma(t))$ is proportional to $f(\gamma(t))$. Similarly, $\frac{d^{k}}{dt^{k}}f(\gamma(t))$ is proportional to $f(\gamma(t))$ for each $k=1,2,\ldots$, so we have that
\begin{equation}
\frac{d^{k}}{dt^{k}}f(\gamma(t))|_{t=t_{0}}\ = \ 0 \ ,
\end{equation}
for every $k=1,2,\ldots$, and $f(\gamma(t))$ as well as all its derivatives vanish on $t=t_{0}$. Since we assume that $f(\gamma(t))$ is an analytic function then $f(\gamma(t))=0$ $\forall\, t\in I$ \cite{Lima,Rudin,Ross}, i.e., $\gamma(t)\in M_{f} \ \forall\, t\in I$.
\end{proof}

The defining condition for a particular integral, namely $\lbrace f,H\rbrace=af$, has been presented in \cite{BCM2012,KKM2018} as the consistency requirement for looking for solutions of the Hamilton-Jacobi equation through the imposition of the so called \textit{side conditions}. Explicitly, suppose we have the Hamilton-Jacobi equation in certain orthogonal coordinates $(x^{i})$
\begin{equation}
\label{HJeq}
 g^{jj}(x)\bigg(\frac{\partial u}{\partial x^{j}}\bigg)^{2}\ + \ V(x) \ = \ E \ .
\end{equation}
If $S(x,p)$ is a well-defined function in phase space, quadratic in the momenta $(p_{i})$ and diagonalizable in the same coordinates $(x^{i})$, such that $\lbrace S,H\rbrace=a\,S$ for some function $a$, then it allows us to look for solutions $u(x)$ of the Hamilton-Jacobi equation (\ref{HJeq}) subjected to the side condition $S(x,p)=0$ (for further details see \cite{BCM2012}). In particular, it will lead to regular and non-regular separation of variables in (\ref{HJeq}). Our present approach presents two important differences with respect to the treatment described in \cite{BCM2012}. Firstly, the formalism is mainly based on the underlying symplectic geometry of the system rather than on the separability of the Hamilton-Jacobi equation. Secondly, it is more general since we consider Hamiltonian functions not necessarily quadratic in the momentum variables.    

\subsection{Towards the construction of a reduced Hamiltonian}

The existence of a particular integral enable us to find solutions of the Hamilton's equations of motion that are solutions of a reduced dynamical system. Indeed, since $M_{f}$ is closed under the dynamics of $X_{H}$, we can look for integral curves of $X_{H}$ living in $M_{f}$ only. These solutions are not the most general ones. However, they are solutions of a reduced system of differential equations. This is demonstrated in the following Theorem.
\begin{te}
If $f$ is a particular integral of $(M,\omega,H)$ then the solutions of the Hamilton's equations of motion that live in $M_{f}$ can be found by solving a system of $2n-1$ differential equations and quadratures.
\end{te}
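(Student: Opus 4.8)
\textit{Proof proposal.}
The plan is to combine the preceding Lemma with the elementary but crucial observation that $X_{H}$ is in fact tangent to $M_{f}$, to restrict the dynamics to the $(2n-1)$-dimensional invariant submanifold $M_{f}$, and then to extract the remaining reduction from the fact that $H$ is conserved there together with the coisotropic structure of $M_{f}$. First I would establish the tangency: for every $x\in M_{f}$ one has $df_{x}(X_{H}(x))=(X_{H}f)(x)=\lbrace f,H\rbrace(x)=a(x)\,f(x)=0$, the last equality because $f(x)=0$; since $0$ is a regular value, $T_{x}M_{f}=\ker df_{x}$, so $X_{H}(x)\in T_{x}M_{f}$. (The hypothesis that $a$ take real values on $f=0$ merely guarantees that nothing degenerates along $M_{f}$; the field $X_{H}$ is globally smooth to begin with.) Hence $X_{H}$ restricts to a smooth vector field $\widetilde{X}:=X_{H}|_{M_{f}}$ on $M_{f}$, and the integral curves of $X_{H}$ that live in $M_{f}$ are exactly the integral curves of $\widetilde{X}$; in any local chart $(y^{1},\dots,y^{2n-1})$ of $M_{f}$ these satisfy the autonomous system $\dot{y}^{k}=\widetilde{X}^{k}(y)$, that is, $2n-1$ first-order ODEs.

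Second, I would obtain the quadratures by exploiting two structural facts about $\widetilde{X}$. On the one hand, $H|_{M_{f}}$ is a first integral of $\widetilde{X}$, because $\widetilde{X}(H|_{M_{f}})=\lbrace H,H\rbrace|_{M_{f}}=0$. On the other hand, being a regular hypersurface, $M_{f}$ is coisotropic, its characteristic foliation being generated by $X_{f}|_{M_{f}}$ (which is tangent to $M_{f}$ since $X_{f}f=\lbrace f,f\rbrace=0$); and the identity $\lbrace H,f\rbrace|_{M_{f}}=-af|_{M_{f}}=0$ says precisely that $H$ is constant along the characteristics, so it descends to a reduced Hamiltonian $\widehat{H}$ on the locally well-defined $(2n-2)$-dimensional symplectic quotient $M_{f}/{\sim}$. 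Concretely, completing $f$ to a Darboux chart $(q^{1},\dots,q^{n},p_{1},\dots,p_{n})$ with $f=q^{1}$ — possible since $df\neq 0$ on $M_{f}$ — one checks that $\partial H/\partial p_{1}=\lbrace q^{1},H\rbrace=a\,q^{1}$ vanishes on $q^{1}=0$, so $H|_{\lbrace q^{1}=0\rbrace}$ does not depend on $p_{1}$ and the equations for $(q^{2},\dots,q^{n},p_{2},\dots,p_{n})$ form a Hamiltonian system with $n-1$ degrees of freedom; one solves this reduced system (lowering its order still further, if desired, by its own energy integral and a quadrature), and recovers the last variable $p_{1}$ from the already-known data. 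Collecting everything, the solutions living in $M_{f}$ are produced by a system of $2n-1$ differential equations together with quadratures, which is the assertion.

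The step I expect to be the genuine obstacle is this second one — making the reduction rigorous rather than heuristic. One has to verify that the characteristic foliation of the coisotropic $M_{f}$ is regular enough for the quotient to carry a (symplectic) manifold structure, which is true locally and suffices; that $\widehat{H}$ is well defined on it; and, in the coordinate formulation, that the reduced Hamilton equations are genuinely free of the eliminated momentum. Each of these hinges on $\lbrace f,H\rbrace=af$ together with $f|_{M_{f}}=0$, but the bookkeeping of which partial derivatives vanish on $M_{f}$ must be done carefully. By contrast, the tangency of $X_{H}$ to $M_{f}$ and the consequent restriction to $2n-1$ equations follow at once from the one-line computation $df(X_{H})=\lbrace f,H\rbrace=af=0$ on $M_{f}$.
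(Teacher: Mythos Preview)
Your proposal is correct and follows essentially the same route as the paper, with cosmetic differences worth noting. Both arguments complete $f$ to a canonical chart and observe that, on $\{f=0\}$, the Hamiltonian becomes independent of the variable conjugate to $f$, so the system collapses to $2n-1$ equations with a decoupled $(2n-2)$-dimensional Hamiltonian block plus one remaining ODE. The paper promotes $f$ to a \emph{momentum} $P_{n}$ via a generating function $F_{2}$; you instead invoke the Darboux theorem to make $f$ a \emph{coordinate} $q^{1}$. These are dual choices and the computations are mirror images: in the paper $\dot P_{n}=-\partial K/\partial Q^{n}=af$ forces $K|_{f=0}$ to be independent of $Q^{n}$, while in your version $\dot q^{1}=\partial H/\partial p_{1}=aq^{1}$ forces $H|_{q^{1}=0}$ to be independent of $p_{1}$. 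Your opening tangency argument $df(X_{H})=\{f,H\}=af=0$ on $M_{f}$ is cleaner and more intrinsic than the paper's reliance on the preceding Lemma, and your coisotropic/characteristic-foliation language is additional geometric context the paper does not use (it simply writes down the reduced equations). One small caution: in your last step, $\dot p_{1}=-\partial H/\partial q^{1}|_{q^{1}=0}$ may still depend on $p_{1}$, so ``recovering $p_{1}$'' is in general a first-order ODE rather than a literal quadrature --- but this matches exactly what happens in the paper for the conjugate variable $Q^{n}$, and is consistent with the theorem's wording.
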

\begin{proof}
Let $f$ be a particular integral of $(M,\omega,H)$. If we have a canonical transformation $(q,p)\mapsto (Q,P)$ such that $P_{n}=f$, then the Hamilton's equations of motion in the new coordinates $(Q,P)$ take the form
\begin{equation}
\label{eqK}
\left\lbrace \begin{array}{c}
\dot{Q}^{1}= \frac{\partial K}{\partial P_{1}}\\ 
\vdots \\ 
\dot{Q}^{n-1}= \frac{\partial K}{\partial P_{n-1}}\\ 
\dot{Q}^{n}= \frac{\partial K}{\partial f}\\ 
\dot{P}_{1}= -\frac{\partial K}{\partial Q^{1}}\\ 
\vdots \\ 
\dot{P}_{n-1}= -\frac{\partial K}{\partial Q^{n-1}}\\ 
\dot{P}_{n}=\dot{f}= -\frac{\partial K}{\partial Q^{n}}
\end{array} \right. \ ,
\end{equation}
with $K=H(Q,P)$ the new Hamiltonian function. The corresponding canonical transformation can be described by a generating function of the form $F=F_{2}(q^{1},\cdots,q^{n},P_{1},\cdots,P_{n-1},f)-Q^{i}P_{i}$ \cite{GPS2002,Landau,Calkin}. Since we are looking for solutions of the Hamilton's equations of motion that live in $M_{f}$ then we set up $f=0$. Consequently, at $f=0$ the system (\ref{eqK}) reduces to $P_{n}=f=0$ and the $2n-1$ differential equations:
\begin{equation}
\label{reducedsys}
\left\lbrace \begin{array}{c}
\dot{Q}^{1}= \frac{\partial K}{\partial P_{1}}\big|_{{}_{f=0}}\\ 
\vdots \\ 
\dot{Q}^{n-1}= \frac{\partial K}{\partial P_{n-1}}\big|_{{}_{f=0}}\\
\dot{Q}^{n}= \frac{\partial K}{\partial f}\big|_{{}_{f=0}}\\ 
\dot{P}_{1}= -\frac{\partial K}{\partial Q^{1}}\big|_{{}_{f=0}}\\ 
\vdots \\ 
\dot{P}_{n-1}= -\frac{\partial K}{\partial Q^{n-1}}\big|_{{}_{f=0}}\\ 
\end{array} \right. \ .
\end{equation}
\end{proof}
It is worth remarking that system (\ref{reducedsys}) is not a system of Hamilton's equations of motion since it is a system of an odd number of differential equations. In fact, $(Q^{1},\cdots,Q^{n},P_{1},\cdots,P_{n-1})$ are coordinates on $M_{f}$ which is a smooth manifold of odd dimension $2n-1$. Hence, $M_{f}$ is not symplectic. Here, (\ref{reducedsys}) are the equations of motion for the autonomous dynamical system defined by $X_{H}|_{M_{f}}$, i.e., the system specified by the action of $X_{H}$ on $M_{f}$ which is not Hamiltonian. Some remarks are in order:
\begin{itemize}
    \item Since we are taking $P_{n}=f=0$, thus, $\dot{P}_{n}=-\frac{\partial K}{\partial Q^{n}}\big|_{{}_{f=0}}=0$, the function $K\big|_{M_{f}}$ does not depend on the coordinate $Q^{n}$. Therefore, the time evolution of each coordinate $Q^{1},\cdots,Q^{n-1},P_{1},\cdots,P_{n-1}$ on $M_{f}$ is independent of $Q^{n}$. 
    \item Because $P_{n}=f$ is not a coordinate on $M_{f}$, formally the term $\frac{\partial K}{\partial f}\big|_{{}_{f=0}}$ can not be calculated as a derivative on $M_{f}$. In fact, it is a function defined on $M$ restricted to $M_{f}$. Hence, it follows that the time evolution of $Q^{n}$, $\dot{Q}^{n}= \frac{\partial K}{\partial f}\big|_{{}_{f=0}}$, may depend on $Q^{n}$ as well.
    \item One can solve the original system (\ref{reducedsys}) by solving first the reduced system of $2(n-1)$ equations 
    \begin{equation}
\label{reducedhamsys}
\left\lbrace \begin{array}{c}
\dot{Q}^{1}= \frac{\partial K}{\partial P_{1}}\big|_{{}_{f=0}}\\ 
\vdots \\ 
\dot{Q}^{n-1}= \frac{\partial K}{\partial P_{n-1}}\big|_{{}_{f=0}}\\
\dot{P}_{1}= -\frac{\partial K}{\partial Q^{1}}\big|_{{}_{f=0}}\\ 
\vdots \\ 
\dot{P}_{n-1}= -\frac{\partial K}{\partial Q^{n-1}}\big|_{{}_{f=0}}\\ 
\end{array} \right.  \ ,
\end{equation}
     and afterwards integrating the equation $\dot{Q}^{n}=\frac{\partial K}{\partial f}(Q^{1},\cdots,Q^{n-1},Q^{n},P_{1},\cdots,P_{n-1})$.
\end{itemize}

System (\ref{reducedhamsys}) is a system of Hamilton's equations of motion in the coordinates $(Q^{1},\cdots,Q^{n-1},P_{1},\cdots,P_{n-1})$. Indeed, we have that $(Q^{1},\cdots,Q^{n-1},P_{1},\cdots,P_{n-1})$ are local coordinates for the submanifold $M_{f,Q}=\lbrace x\in M: \ f(x)=0, \ Q^{n}(x)={\rm constant} \rbrace$ and $(M_{f,Q},\omega\big|_{M_{f,Q}})$ is a symplectic manifold of dimension $2(n-1)$. We can see that the expression of $\omega\big|_{M_{f,Q}}$ in the local coordinates $(Q^{1},\cdots,Q^{n-1},P_{1},\cdots,P_{n-1})$ is $dQ^{1}\wedge dP_{1}+\cdots+dQ^{n-1}\wedge dP_{n-1}$. Therefore, (\ref{reducedhamsys}) is the system of Hamilton's equations of motion in the lower-dimensional phase space $(M_{f,Q},\omega\big|_{M_{f,Q}})$ with Hamiltonian function $K\big|_{M_{f}}=K\big|_{M_{f}}(Q^{1},\cdots,Q^{n-1},P_{1},\cdots,P_{n-1})$. The dynamics defined by the Hamiltonian system $(M_{f,Q},\omega\big|_{M_{f,Q}},K\big|_{M_{f}})$ is the projection of the dynamics of $X_{H}\big|_{M_{f}}$ into the submanifold $M_{f,Q}$ in the sense that if $\pi:M_{f}\longrightarrow M_{f,Q}$ is the projection map from $M_{f}$ to the submanifold $M_{f,Q}$ defined in the canonical coordinates $(Q,P)$ by 
\begin{equation}
    \pi(Q^{1},\cdots,Q^{n-1},Q^{n},P_{1},\cdots,P_{n-1})\ = \ (Q^{1},\cdots,Q^{n-1},P_{1},\cdots,P_{n-1})\ ,
\end{equation}
then the trajectories of the system $(M_{f,Q},\omega\big|_{M_{f,Q}},K\big|_{M_{f}})$ are images by $\pi$ of trajectories of the system defined by $X_{H}$ on $M_{f}$, i.e., $\gamma(t)=(\,Q^{1}(t),\cdots,Q^{n-1}(t),P_{1}(t),\cdots,P_{n-1}(t)\,)$ is a trajectory of the system $(M_{f,Q},\omega\big|_{M_{f,Q}},K\big|_{M_{f}})$ if and only if there is a trajectory $\overline{\gamma(t)}=(Q^{1}(t),\cdots,Q^{n-1}(t),Q^{n}(t),P_{1}(t),\cdots,P_{n-1}(t))$ such that $\pi(\overline{\gamma(t)})=\gamma(t)$ for every $t$, see Figure \ref{FMfQ}. 

To lift the dynamics from $M_{f,Q}$ to $M_{f}$ we consider trajectories $\gamma(t)=(Q^{1}(t),\cdots,Q^{n-1}(t),P_{1}(t),\cdots,P_{n-1}(t))$ of the system $(M_{f,Q},\omega\big|_{M_{f,Q}},K\big|_{M_{f}})$ and construct trajectories $\overline{\gamma(t)}=(Q^{1}(t),\cdots,Q^{n-1}(t),Q^{n}(t),P_{1}(t),\cdots,P_{n-1}(t))$ on $M_{f}$ by integrating the equation $\dot{Q}^{n}=\frac{\partial K}{\partial f}(Q^{1},\cdots,Q^{n},P_{1},\cdots,P_{n-1})$. It is worth mentioning that this reduction process is possible due to the fact that the time evolution of each coordinate $Q^{1},\cdots,Q^{n-1},P_{1},\cdots,P_{n-1}$ is independent of $Q^{n}$. Summarizing:
\begin{itemize}
    \item $(Q^{1},\cdots,Q^{n-1},P_{1},\cdots,P_{n-1})$ are local coordinates on $M_{f,Q}$, and (\ref{reducedhamsys}) is the system of Hamilton's equations of motion for the reduced Hamiltonian system with phase space $(M_{f,Q},\omega\big|_{M_{f,Q}})$ and Hamiltonian function $K\big|_{M_{f}}=K\big|_{M_{f}}(Q^{1},\cdots,Q^{n-1},P_{1},\cdots,P_{n-1})$.
    \item The dynamics defined by the Hamiltonian system $(M_{f,Q},\omega\big|_{M_{f,Q}},K\big|_{M_{f}})$ is the projection of the dynamics of $X_{H}\big|_{M_{f}}$ into the submanifold $M_{f,Q}$. In general, it is not contained into the dynamics defined by the vector field $X_{H}$ on $M_{f}$, see Fig. \ref{FMfQ}. Nevertheless, we still have a reduction process since the Hamiltonian system $(M_{f,Q},\omega\big|_{M_{f,Q}},K\big|_{M_{f}})$ is a system with $n-1$ degrees of freedom.
\end{itemize}

\begin{figure} [ht!]
    \centering
    \includegraphics[width=0.9\textwidth]{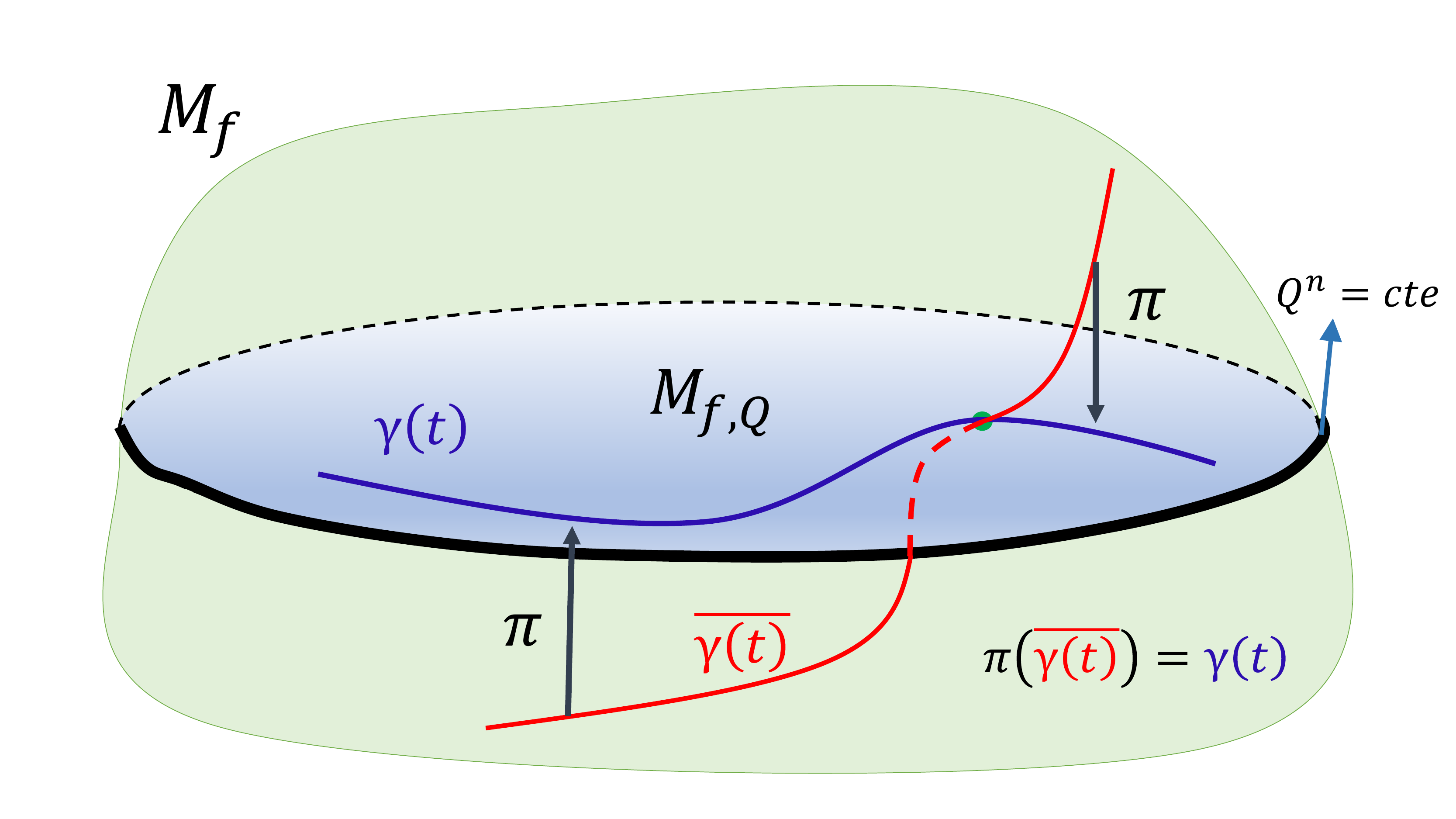}
    \caption{The dynamics defined by the reduced Hamiltonian system $(M_{f,Q},\omega\big|_{M_{f,Q}},K\big|_{M_{f}})$ with $n-1$ degrees of freedom is the projection of the dynamics of $X_{H}\big|_{M_{f}}$ into the submanifold $M_{f,Q}$. }
\label{FMfQ}    
\end{figure}

We conclude that solutions of the original Hamilton's equations of motion are also solutions of a reduced system of Hamilton's equations of motion, but the dynamics of the reduced system is not contained into the dynamics of the bigger one in general, see Figure \ref{FMfQ}. 

Only in some special cases, the existence of a particular integral allows us to construct a reduced Hamiltonian system whose dynamics is fully contained into the dynamics of the original one. Indeed, let us suppose that $f$ is a particular integral of $(M,\omega,H)$ and that we have a canonical transformation $(q,p)\mapsto (Q,P)$ for which $P_{n}=f$. In addition, let us assume that for the new Hamiltonian function $K$ the relation
\begin{equation}
\label{eqcuadratic}
\frac{\partial K}{\partial f}\ = \ b\,f \ ,
\end{equation}
holds for some function $b=b(Q,P)$ with real values on $f=0$, it implies $\dot{Q}^{n}= b\,f$. Thus, by putting $f=0$ the derivative $\dot{Q}^{n}$ vanishes and the equations (\ref{eqK}) in the new coordinates $(Q,P)$ reduces to the two equations $P_{n}=f=0$, $Q^{n}={\rm constant}$ and the system of $2(n-1)$ differential equations (\ref{reducedhamsys}). In this case, system (\ref{reducedhamsys}) is the system of Hamilton's equations of motion in the canonical coordinates $(Q^{1},\cdots,Q^{n-1},P_{1},\cdots,P_{n-1})$ for the reduced Hamiltonian system $(M_{f,Q},\omega\big|_{M_{f,Q}},K\big|_{M_{f,Q}})$ whose dynamics is defined by the restriction of the vector field $X_{H}$ to the submanifold $M_{f,Q}$, i.e., the dynamics of $(M_{f,Q},\omega\big|_{M_{f,Q}},K\big|_{M_{f,Q}})$ is contained in the dynamics of $(M,\omega,H)$. Condition (\ref{eqcuadratic}) is particularly fulfilled in natural physical systems where the new coordinates $(Q,P)$ are orthogonal, i.e., systems where the Lagrangian function is equal to the difference between the kinetic energy (a diagonal quadratic form in the generalized velocities) and the potential energy \cite{Arnold89} where the Hamiltonian function in the new coordinates $(Q,P)$ has the form:
\begin{equation}
K(Q,P)\ = \ \frac{1}{2}g^{jj}(Q)\,P_{j}^{2} \ + \ V(Q) \ .
\end{equation}
The reduction process described above is intimately related with a phenomena beyond separation of variables. Also, it is clear that each global constant of motion is a particular integral, but the converse statement is not, in general, true.  

Now, the same reduction process can be applied to the reduced system as well. For this purpose we require, in addition to $f$, a second particular integral $g$. Concretely, we can see that a function $g\in C^{\infty}(M)$ such that $f$ and $g$ obey, $\lbrace f,g\rbrace=a_{1}\,g+a_{2}\,f$ and $\lbrace g,H\rbrace=a_{3}\,g+a_{4}\,f$, being $a_i\in C^{\infty}(M)$ functions with real values on $f=g=0$, defines a particular integral of the reduced Hamiltonian system $K\big|_{M_{f}}$. 

In \cite{BCM2012} the idea of imposing two or more side conditions to the Hamilton-Jacobi equation (\ref{HJeq}) was presented previously. For two side conditions $S_{1}=S_{2}=0$, the so called consistency requirement is 
\begin{equation}
\label{eqtwoside}
\lbrace S_{i},S_{j}\rbrace=a^{ij}_{1}S_{1}+a^{ij}_{2}S_{2} \ ,
\end{equation}
($i,j=1,2,\ldots,n$). The above relations can be considered as the consistency conditions that guarantee the $S_i$ are integrals of motion for the Hamiltonian function $H=S_{3}$  of the system with $n$ degrees of freedom.

\subsection*{Particular integrals in the central-force problem}

To show how Theorem 1 can be employed in practice we first treat an elementary example in detail. More complicated examples will be presented later in Sections 3 and 4. Let us consider the classical central-force problem in $\mathbb{R}^3$. In Cartesian coordinates $\mathbf{r}=(x,y,z)=(q^1,q^2,q^3)$, the corresponding Hamiltonian reads
\begin{equation}
\label{hc}
    {\rm H}(q,p) \ = \ \frac{1}{2\,m}\,\big(\,p_x^2 \ + \ p_y^2 \ + \ p_z^2 \,\big)  \ + \ V \ ,
\end{equation}
here $m$ denotes the mass of the particle, $V=V(r)$ is a rotationally-invariant potential with $r= \sqrt{x^2+y^2+z^2}$ being the Euclidean distance to the center of the force, and $(p_x, p_y, p_z)=(p_1,p_2,p_3)$ are the canonical momentum variables, respectively. Making a canonical transformation to the non-orthogonal coordinate system $(r,\,\phi,\,\rho)=(Q^1,Q^2,Q^3)$
\[
r= \sqrt{x^2+y^2+z^2} \quad , \qquad  \phi= \tan^{-1}\bigg(\frac{y}{x}\bigg) \quad , \qquad \rho= \sqrt{x^2+y^2} \ ,
\]
the Hamiltonian (\ref{hc}) becomes
\begin{equation}
\label{hc2}
  K \ \equiv \ {\rm H}(Q,P) \ = \ \frac{1}{2\,m}\,\bigg(\,p_{\rho}^2  \ + \ p_{r}^2  \ + \ \frac{2\,\rho}{r}\, p_{\rho}\, p_{r} \ + \ \frac{1}{\rho^2}\, p_{\phi}^2\,\bigg) \ + \ V(r) \ ,
\end{equation}
with $p_r$ the canonical conjugate momentum to $r$, and so on. From (\ref{hc2}) we obtain the Hamilton's equations of motion:
\begin{equation}
\begin{aligned}
\label{eqmcp}
& {\dot r}  \ = \ \frac{1}{m}\,p_r \ + \ \frac{1}{m}\frac{\rho}{r}\,p_\rho 
\\ &
 {\dot p}_r \ = \ \frac{1}{m}\frac{\rho}{r^2}\, p_{\rho}\, p_{r} \ - \ V^\prime(r)
  \\ & 
 {\dot \rho}  \ = \ \frac{1}{m}\,p_\rho \ + \ \frac{1}{m}\frac{\rho}{r}\,p_r
 \\ &
 {\dot p}_\rho \ = \ -\frac{1}{m\,r}\, p_{\rho}\, p_{r} \ + \ \frac{1}{m\,\rho^3}\,p_\phi^2 
 \\ & 
 {\dot \phi}  \ = \ \frac{1}{m\,\rho^2}\,p_\phi
 \\ & 
 {\dot p}_\phi  \ = \ 0 \ ,
\end{aligned}    
\end{equation}
$V^\prime(r)\equiv \frac{d}{dr}V$.
In this case, $p_{\phi}$ is a global integral of motion ($\lbrace p_{\phi},\,K\rbrace=0$, $\phi$ is a cyclic coordinate), thus, a trivial particular integral. On the other hand, $p_\rho$ is not a global constant of motion for $K$. The Poisson bracket
\[
\lbrace p_{\rho},\,K\rbrace \ = \ - \bigg(\frac{1}{m\,r}\,p_r\bigg)\,p_\rho \ + \ \bigg(\frac{1}{m\,\rho^3}\,p_{\phi}\,\bigg)\,p_{\phi} \ ,
\]
in general, does not vanish. However, on the invariant manifold defined by $p_{\phi}=0$, the variable $p_{\rho}$ becomes a particular integral for the reduced Hamiltonian 
\[
G \  \equiv \  K\big|_{M_{p_\phi}} \  = \  \frac{1}{2\,m}\,\big(\,p_{\rho}^2  \ + \ p_{r}^2 \  + \  \frac{2\,\rho}{r}\, p_{\rho}\, p_{r}   \,\big) \  + \  V(r) \ ,
\]
$\lbrace p_{\rho},\,G\rbrace  =  - (\,\frac{1}{m\,r}\,p_r\,)\,p_\rho$. Thus, at $p_{\phi}=0$ the system (\ref{eqmcp}) reduces to ${\dot p}_\phi=0$, $\phi = {\rm constant}$ and the equations:
\begin{equation}
\begin{aligned}
\label{eqmcpred}
& {\dot r}  \ = \ \frac{1}{m}\,p_r \ + \ \frac{1}{m}\frac{\rho}{r}\,p_\rho 
\\ &
 {\dot p}_r \ = \ \frac{1}{m}\frac{\rho}{r^2}\, p_{\rho}\, p_{r} \ - \ V^\prime(r)
  \\ & 
 {\dot \rho}  \ = \ \frac{1}{m}\,p_\rho \ + \ \frac{1}{m}\frac{\rho}{r}\,p_r
 \\ &
 {\dot p}_\rho \ = \ -\frac{1}{m\,r}\, p_{\rho}\, p_{r} \ ,
\end{aligned}    
\end{equation}
which are the Hamilton's equations of the reduced Hamiltonian $G$. Moreover, at $p_{\phi}=0$ and $p_{\rho}=0$, the above system (\ref{eqmcpred}) further reduces to ${\dot p}_\rho=0$ and the equations 
\begin{equation}
\begin{aligned}
\label{eqmcpred2}
& {\dot r}  \ = \ \frac{1}{m}\,p_r 
\\ &
 {\dot p}_r \ = \  -  V^\prime(r)
  \\ & 
 {\dot \rho}  \ = \  \frac{1}{m}\frac{\rho}{r}\,p_r\ .
\end{aligned}    
\end{equation}
Here, the equations (\ref{eqmcpred2}) define an autonomous dynamical system of third order but they do not correspond to a system of Hamilton's equations of motion. Eventually, using the integrals $p_\phi$ and $p_\rho$ we arrive to the one-dimensional reduced Hamiltonian
\begin{equation}
    J \  \equiv \  G\big|_{M_{p_\rho}} \ = \ \frac{1}{2\,m}\, p_{r}^2  \ + \ V(r) \ .
\end{equation}
It describes all the trajectories of the original three-dimensional Hamiltonian system (\ref{hc}) possessing zero total angular momentum.

In the previous example, the geometric structure on the phase space (the symplectic structure) is present in the defining equation of the Poisson bracket (\ref{PB}) whilst in (\ref{CPB}) the Poisson bracket is written in a selected set of canonical coordinates.

In analogy to global integrals, if for a given Hamiltonian system there is a  sufficient number of particular integrals then we could reduce the system of the Hamilton's equations of motion up to the point that we can solve it by quadratures. Again, these solutions may not be the most generic ones. We address this reduction in the next Section.

\section{Particular integrability}

For Hamiltonian systems, the notion of complete integrability in the Liouville sense is equivalent to the statement that the corresponding solutions of the Hamilton's equations can be obtained by quadratures, provided that a sufficient number of functionally-independent integrals of motion (first integrals) in involution are known. In the previous Section, we have defined the concept of particular integral. Now, we are interested in the notion of particular complete Liouville integrability. To this end, below we introduce the property of particular involution. 

Explicitly,  the set $C^{\infty}(M)$ has structure of a ring, namely $(+,.)$, and it has structure of a module over itself as well; if we consider $C^{\infty}(M)$ as a module over itself then for given $f_{1},\cdots,f_{k}\in C^{\infty}(M)$ the set $\lbrace a_{1}f_{1}+\cdots+a_{k}f_{k} \ : \ a_{1},\cdots,a_{k}\in C^{\infty}(M)\rbrace$ is a submodule of $C^{\infty}(M)$ called the submodule finitely generated by $f_{1},\cdots,f_{k}$ \cite{Rotman2002}. The set $C^{\infty}(M)$ is an algebra with the Poisson bracket $\lbrace,\rbrace$, if we have a collection of functions $f_{1},\cdots,f_{k}\in C^{\infty}(M)$ we can look at the subalgebra generated by the Poisson brackets $\lbrace f_{i},f_{j}\rbrace$. The Bour-Liouville theorem on integrability by quadratures \cite{AKN2006,Prykarpatsky99,Azuaje2022} states that the solutions of the Hamilton's equations can be obtained by quadratures, provided the existence of a sufficient number of functionally-independent integrals of motion that form a solvable Lie algebra with the Poisson bracket. For this result the structure of a module of $C^{\infty}(M)$ over itself is not considered, i.e., only linear combinations of the form $\ c^{ij}_{1}f_{1}\,+\,\cdots\,+\,c^{ij}_{n}f_{k}$, with $c^{ij}_{1},\cdots,c^{ij}_{k}\in\mathbb{R}$, are involved. It is worth remarking that in this paper the linear combinations are considered on $C^{\infty}(M)$ as a module over itself, so the coefficients are smooth functions on $M$. Thus, a linear combination with constants coefficients (like in the Bour-Liouville theorem) is a special case of the present consideration.
\begin{de}
We say that the functions  $f_{1},\cdots,f_{k}\in C^{\infty}(M)$ are in particular involution if 
\begin{equation}
\label{eqPS}
\lbrace f_{i},f_{j}\rbrace\ = \ a^{ij}_{1}f_{1}\,+\,\cdots\,+\,a^{ij}_{n}f_{k} \ ,
\end{equation}
for some functions $a^{ij}_{1},\cdots,a^{ij}_{k}\in C^{\infty}(M)$ with real values on $f_{1}=\cdots=f_{k}=0$.
\end{de}
These functions $a^{ij}_{1},\cdots,a^{ij}_{k}\in C^{\infty}(M)$ in the previous definition are such that \[ \lim_{x\to x_{0}} a^{ij}_{s}(x)\ = \ a^{ij}_{s}(x_{0})\in\mathbb{R}\quad , \qquad \forall \,\,x_{0}\in M_{f}=\lbrace x\in M: \ f_{1}(x)=\cdots=f_{k}(x)=0\rbrace \ . \] Next, if the functions  $f_{1},\cdots,f_{k}\in C^{\infty}(M)$ are in particular involution then every element of the Lie subalgebra generated by the Poisson brackets $\lbrace f_{i},f_{j}\rbrace$ is an element of the submodule finitely generated
by $f_{1},\cdots,f_{k}$. We can see the similarity between condition (\ref{eqtwoside}) and condition (\ref{eqPS}); in fact, the second one follows the pattern of the first one. We can also see that the condition (\ref{eqPS}) is more general than the standard involution condition, i.e., a collection of function $f_{1},\cdots,f_{k}\in C^{\infty}(M)$ in involution trivially satisfies condition (\ref{eqPS}). The following theorem generalizes the Liouville theorem on integrability by quadratures.
\begin{te}
\label{tePI}
If $f_{1},\cdots,f_{n}\in C^{\infty}(M)$ are functionally-independent and in particular involution functions, then for the Hamiltonian system defined on $(M,\omega)$ with Hamiltonian function $f_{1}$, we can find special solutions (not the most general ones) of the Hamilton's equations of motion by quadratures. 
\end{te}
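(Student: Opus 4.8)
The plan is to show that the ``special solutions'' in the statement are precisely the integral curves of $X_{f_{1}}$ that meet the common zero set $M_{f}=\lbrace x\in M:\ f_{1}(x)=\cdots=f_{n}(x)=0\rbrace$, and that on $M_{f}$ the particular involution hypothesis becomes an honest Lie-type structure to which a reduction by quadratures applies. The first step is that $M_{f}$ is invariant under $X_{f_{1}}$. I would get this from tangency alone, without the analyticity argument used for Lemma~1: for $x\in M_{f}$ one has $(X_{f_{1}}f_{i})(x)=\lbrace f_{i},f_{1}\rbrace(x)=\sum_{k}a^{i1}_{k}(x)\,f_{k}(x)=0$ for every $i$, so $X_{f_{1}}$ is tangent to $M_{f}$, and an integral curve of a vector field tangent to an embedded submanifold that meets the submanifold is contained in it. (The same computation shows every $X_{f_{i}}$ is tangent to $M_{f}$, and it re-proves Lemma~1.) Hence it suffices to integrate $\dot{x}=X_{f_{1}}|_{M_{f}}$ on $M_{f}$ by quadratures.

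Next I would record the geometry of $M_{f}$. Functional independence makes $0$ a regular value of $F=(f_{1},\dots,f_{n})$, so $\dim M_{f}=n$; the vector fields $Y_{i}:=X_{f_{i}}|_{M_{f}}$, $i=1,\dots,n$, are tangent, pointwise linearly independent (the $df_{i}$ are), hence a global frame of $TM_{f}$; and since $\omega(X_{f_{i}},X_{f_{j}})=\lbrace f_{i},f_{j}\rbrace$ vanishes on $M_{f}$ and the $Y_{i}$ span $TM_{f}$, we get $\omega|_{M_{f}}=0$, i.e.\ $M_{f}$ is Lagrangian (in particular not symplectic, so the symplectic reduction of Theorem~1 cannot be applied to $M_{f}$ itself). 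Finally, from $X_{ag}=aX_{g}+gX_{a}$ and $[X_{f_{i}},X_{f_{j}}]=-X_{\lbrace f_{i},f_{j}\rbrace}$ one finds, on $M_{f}$, $[Y_{i},Y_{j}]=-\sum_{k}\bar{a}^{ij}_{k}\,Y_{k}$ with $\bar{a}^{ij}_{k}:=a^{ij}_{k}|_{M_{f}}$; the hypothesis that the $a^{ij}_{k}$ take real values on $M_{f}$ is exactly what makes the $\bar{a}^{ij}_{k}$ genuine finite smooth functions on $M_{f}$. So on $M_{f}$ the particular-involution data has become a finite-dimensional structure: $n$ independent vector fields, among them the dynamics $Y_{1}$, that close on themselves.

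It then remains to integrate $\dot{x}=Y_{1}$ on the $n$-dimensional $M_{f}$ by quadratures, and this is where the ``adaptive application of the results of Lie and Liouville'' enters. One route is inductive, mirroring the one-step reduction of Theorem~1 and the remark on a second particular integral that follows it: peel off one degree of freedom at a time by a canonical transformation with $P_{n}=f_{n}$, then $P_{n-1}=f_{n-1}$, and so on, until an autonomous one-degree-of-freedom Hamiltonian system remains, which is integrable by a single quadrature via its energy integral, and then recover the suppressed coordinates $Q^{n},\dots,Q^{2}$ one at a time by the quadratures $\dot{Q}^{k}=\partial K/\partial f_{k}$ exactly as in the proof of Theorem~1. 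Equivalently, and more geometrically, one recombines the frame $(Y_{1},\dots,Y_{n})$ into a solvable structure for $Y_{1}$ on $M_{f}$ and invokes Lie's theorem that a full-rank solvable structure yields integration by quadratures. Either way the solutions produced are the announced special ones — those lying on $M_{f}$ — and not the generic integral curves of $X_{f_{1}}$; this is the promised generalization of the Liouville and Bour--Liouville theorems.

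The main obstacle is precisely this last step. The relations $\lbrace f_{i},f_{j}\rbrace=\sum_{k=1}^{n}a^{ij}_{k}f_{k}$ are not, as written, in triangular (solvable) form, so a priori $[Y_{i},Y_{j}]$ may involve every $Y_{k}$ and the successive reductions need not be well defined; the point to be argued carefully is that, working on the nested invariant submanifolds on which the already-used functions vanish identically, one can order and recombine the $f_{i}$ (resp.\ the $Y_{i}$) so that at each stage the next function is genuinely a particular integral of the current reduced Hamiltonian (resp.\ the frame is genuinely a solvable structure), with the coefficients keeping their ``real values on the zero set'' property throughout. Two routine technical points also deserve a line: the canonical transformations with $P_{k}=f_{k}$ are local, so the argument runs in a neighbourhood of $M_{f}$ and the local pieces must be patched; and, as in Lemma~1, enough regularity (analyticity) of $H$ and the $f_{i}$ is invoked wherever trajectories issuing from an invariant set are claimed to stay in it.
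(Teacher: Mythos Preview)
Your overall architecture matches the paper's exactly: show that $M_f$ is an $n$-dimensional invariant submanifold via the Regular Level Set Theorem, observe that the Hamiltonian vector fields $X_{f_i}$ restrict to a frame $Y_i$ on $M_f$, and then appeal to Lie's theorem (Theorem~\ref{teLie}) to integrate $\dot{x}=Y_1$ by quadratures. The paper's proof is short: after setting up $M_f$ and tangency just as you do, it simply asserts that the $X_{f_i}|_{M_f}$ ``are symmetries of $X_H|_{M_f}$ and generate an Abelian (therefore solvable) Lie algebra,'' and applies Theorem~\ref{teLie} directly.

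The divergence is precisely at the point you flag as the ``main obstacle.'' Your computation $[Y_i,Y_j]=-\sum_k \bar a^{ij}_k\,Y_k$ is correct, and these brackets do \emph{not} vanish on $M_f$ in general: what vanishes there is the function $\{f_i,f_j\}$, not its Hamiltonian vector field $X_{\{f_i,f_j\}}=\sum_k\bigl(a^{ij}_k X_{f_k}+f_k X_{a^{ij}_k}\bigr)$, whose restriction to $M_f$ is $\sum_k \bar a^{ij}_k\,Y_k$. So the paper's Abelian claim is not justified by the particular-involution hypothesis alone; what survives is exactly the $C^{\infty}(M_f)$-module closure you wrote down, which is neither Abelian nor a priori a solvable Lie algebra over $\mathbb{R}$ with constant structure constants, so Theorem~\ref{teLie} as stated does not apply without more. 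The gap you honestly identify in your own proposal is therefore the very gap the paper's proof elides by asserting Abelianness. Neither your inductive peeling nor the paper's direct appeal to Lie closes it under the bare hypothesis of Definition~2; an extra assumption (for instance that the $a^{ij}_k$ themselves vanish on $M_f$, which would indeed make the $Y_i$ commute) or a sharper argument is needed.
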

To be more rigorous, the essence of theorem \ref{tePI} can  be stated as follows: If $f_{1},\cdots,f_{n}\in C^{\infty}(M)$ are functionally-independent and in particular involution functions, then for each Hamiltonian system $(M,\omega,f_{i})$ with $n$ degrees of freedom, the solutions of the Hamilton's equations of motion that live in $M_{f}=\lbrace x\in M: \ f_{1}(x)=\cdots=f_{n}(x)=0\rbrace$ can be found by quadratures. Before giving a proof of theorem \ref{tePI}, we present a brief review of Lie integrability by quadratures of vector differential equations in $\mathbb{R}^{n}$. Lie’s theorem on integrability establishes that if for a vector field we have sufficient linearly independent symmetries that generate a solvable Lie algebra over the Lie bracket of vector fields, then the solutions of the equations of motion of the dynamical system defined by such vector field can be found by quadratures. The concept of solvable Lie algebra is more general than the one of Abelian Lie algebra, indeed, every Abelian Lie algebra is trivially a solvable Lie algebra \cite{Gilmor2008,Fecko2006}. Let $v$ be a smooth vector field on $\mathbb{R}^{n}$, the Lie's theorem \cite{Lie, Kozlov,AKN2006,CFGR2015,Azuaje2022,Cetal2016} indicates sufficient conditions for the dynamical system defined by the vector differential equation $\dot{x}=v(x)$ to be integrable by quadratures, it reads as follows:

\begin{te} {\rm \textbf{[S. Lie]}}
\label{teLie}
Let $u_{1},\ldots u_{n}$ be linearly independent smooth vector fields on $\mathbb{R}^{n}$. If $u_{1},\ldots,u_{n}$ are symmetries of the vector field $v$ and they generate a solvable Lie algebra with the Lie bracket $[,]$ of vector fields, then the solutions of the vector differential equation $\dot{x}=v(x)$ can be found by quadratures.
\end{te}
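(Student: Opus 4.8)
The plan is to dualize the frame and turn the solvability of the symmetry algebra $\mathfrak g:=\mathrm{span}_{\mathbb R}(u_{1},\ldots,u_{n})$ into an explicit triangular structure that can be unwound one quadrature at a time. Let $\theta^{1},\ldots,\theta^{n}$ be the coframe dual to $u_{1},\ldots,u_{n}$; the brackets $[u_{i},u_{j}]=c^{k}_{ij}u_{k}$ (real constants, since $\mathfrak g$ is finite dimensional) are equivalent to the structure equations $d\theta^{k}=-\frac{1}{2}\,c^{k}_{ij}\,\theta^{i}\wedge\theta^{j}$. Because $\mathfrak g$ is solvable its derived subalgebra is proper; iterating (take at each stage a hyperplane containing the derived subalgebra, which is automatically an ideal) and passing to a real-linear change of basis — which keeps a frame each of whose members is a symmetry of $v$ — we may assume $\mathfrak g_{k}:=\mathrm{span}(u_{1},\ldots,u_{k})$ is a subalgebra with $\mathfrak g_{k-1}$ an ideal in $\mathfrak g_{k}$. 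Then $c^{k}_{ij}=0$ whenever $k\ge\max(i,j)$, so each $d\theta^{k}$ lies in the algebraic ideal generated by $\theta^{k+1},\ldots,\theta^{n}$; in particular $d\theta^{n}=0$.

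I would then build adapted coordinates by $n$ quadratures. By the Poincaré lemma $\theta^{n}=dy^{n}$ for a primitive $y^{n}$ of the closed form $\theta^{n}$ (one quadrature), and $\{y^{n}=\mathrm{const}\}$ are the integral leaves of the involutive distribution $\mathfrak g_{n-1}$. Inductively, having produced $y^{n},\ldots,y^{k+1}$, the form $d\theta^{k}$ lies in the ideal generated by $dy^{k+1},\ldots,dy^{n}$, hence $\theta^{k}$ is closed on each common slice $\{y^{k+1}=\mathrm{const},\ldots,y^{n}=\mathrm{const}\}$, and integrating it there (a quadrature with the frozen coordinates as parameters) defines $y^{k}$. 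After $n$ quadratures — plus the algebraic inversion of the coordinate change — one obtains local coordinates $(y^{1},\ldots,y^{n})$ adapted to the flag of distributions $\mathfrak g_{k}$.

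Finally I would integrate $\dot x=v(x)$ in these coordinates. Since each $u_{i}$ is a symmetry of $v$, i.e. $[u_{i},v]=0$, differentiating $\theta^{k}(u_{i})=\delta^{k}_{i}$ along $v$ gives $\mathcal L_{v}\theta^{k}=0$ for every $k$; from $d(\theta^{n}(v))=\mathcal L_{v}\theta^{n}-\iota_{v}\,d\theta^{n}=0$ we get $\dot y^{n}=\theta^{n}(v)=\mathrm{const}$, integrated at once. Equivalently, on a leaf $\{y^{n}=\mathrm{const}\}$ the remaining data — the frame $u_{1},\ldots,u_{n-1}$, the solvable algebra $\mathfrak g_{n-1}$, and the leafwise part of $v$ — reproduce the same situation in dimension $n-1$, while the transverse motion between leaves is governed by the flow of $v$ acting on $u_{1},\ldots,u_{n-1}$ through a one-parameter family of linear maps lying in a solvable subgroup of $GL(n-1,\mathbb R)$, which is computed by successive triangular (hence scalar-linear) quadratures. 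Descending from $k=n$ down to $k=1$ — the base case $n=1$ being $\dot x=c\,u_{1}(x)$, solved by $\int dx/u_{1}(x)=c\,t$ — yields the solutions of $\dot x=v(x)$ after finitely many quadratures and algebraic operations, which is the assertion.

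The step I expect to be the real obstacle is the last one: showing that in the adapted coordinates $v$ genuinely acquires the triangular, leading-variable-affine form for which each reduction step is a bona fide quadrature rather than an arbitrary first-order ODE (a generic triangular system $\dot y^{k}=V^{k}(y^{k},\ldots,y^{n})$ is \emph{not} integrable by quadratures, so affinity in the leading variable is essential) — equivalently, that the interaction of ``$u_{i}$ is a symmetry of $v$'' with the solvability-induced triangular structure constants is exactly what makes the descent work. The remaining ingredients — the structure equations, Poincaré's lemma, leafwise primitives, exponentials of elements of a solvable linear Lie algebra, and the convention that inverting a diffeomorphism and solving a scalar linear ODE count as quadratures — are standard, but one must check that solvability (and not merely finite dimensionality) of $\mathfrak g$ is invoked wherever triangularity is used.
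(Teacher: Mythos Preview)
The paper does not supply its own proof of this statement. Theorem~3 is quoted as the classical result of S.~Lie, supported only by citations, and is then invoked as a black box inside the proof of Theorem~2 (particular integrability). There is therefore nothing in the paper against which to compare your argument.

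On the merits of your sketch: the architecture --- dualize to the coframe $\theta^{k}$, use the solvable flag to make the structure constants $c^{k}_{ij}$ triangular, manufacture adapted coordinates $y^{k}$ by iterated Poincar\'e--lemma quadratures, then integrate $v$ level by level --- is the classical route and is sound in outline. Your self-diagnosed weak point is the correct one. A remark that helps close it: from $\mathcal L_{v}\theta^{k}=0$ and Cartan's formula one gets $d\big(\theta^{k}(v)\big)=-\,v\lrcorner\, d\theta^{k}$; contracting once more with $v$ and using the antisymmetry of $c^{k}_{ij}$ shows that every component $F^{k}:=\theta^{k}(v)$ is in fact a first integral of $v$. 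Hence along any fixed trajectory $v$ coincides with a constant-coefficient element $\sum_{k}a^{k}u_{k}\in\mathfrak g$, and what remains is to exponentiate a single element of a solvable Lie algebra of vector fields --- which the flag $\mathfrak g_{1}\subset\cdots\subset\mathfrak g_{n}$ together with the adapted coordinates reduces to a cascade of scalar equations. This dissolves the ``affinity in the leading variable'' worry you raise, although the bookkeeping for the cascade still has to be written out carefully.
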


Now, we are ready to present the proof of theorem \ref{tePI}.
\begin{proof}
Let $f_{1},\cdots,f_{n}$ be functionally-independent and in particular involution functions in $C^{\infty}(M)$, i.e.
\begin{equation}
\lbrace f_{i},f_{j}\rbrace=a^{ij}_{1}f_{1}+\cdots+a^{ij}_{n}f_{n} \ ,
\end{equation}
for some functions $a^{ij}_{1},\cdots,a^{ij}_{n}\in C^{\infty}(M)$ with real values on $f_{1}=\cdots=f_{k}=0$. Since the functions $f_{1},\cdots,f_{n}$ are functionally independent on $M$ then at any point $p$ in an open dense subset of $M$ the differential maps $df_{1}|_{p},\ldots,df_{n}|_{p}$ are linearly independent; this means that $p$ is a regular point of the differentiable function $F:N\longrightarrow \mathbb{R}^{n}$ defined by $F=(f_{1},\ldots,f_{n})$, so from the Regular Level Set Theorem \cite{Lee2012} we have that $M_{f}=\lbrace x\in M: \ f_{1}(x)=\cdots=f_{n}(x)=0\rbrace$ is a smooth submanifold of $M$ of dimension dim$(M)-n=2n-n=n$.

On the other hand, the vector fields $X_{f_{1}},\ldots,X_{f_{n}}$ are tangent to the submanifold $M_{f}$ because $X_{f_{i}}f_{j}=\lbrace f_{j},f_{i}\rbrace$ which is zero on $M_{f}$. Without lost of generality, let us take the Hamiltonian function $H=f_{1}$. So the solutions of the Hamilton's equations that live in $M_{f}$ are integral curves of $X_{H}|_{M_{f}}$, i.e., the solutions of the Hamilton's equations that live in $M_{f}$ are solutions of the vector differential equation $\dot{x}=X_{H}(x)$ with $x=(x^{1},\ldots,x^{n})$ being local coordinates on $M_{f}$. We know that the assignment $f\mapsto X_{f}$ is a Lie algebra antihomorphism between the Lie algebras $(C^{\infty}(M),\lbrace,\rbrace)$ and $(\mathfrak{X}(M),[,])$ \cite{Lee2012}, i.e., $X_{\lbrace f,g\rbrace}=-[X_{f},X_{g}]$, so we have that the vector fields $X_{f_{1}}|_{M_{f}},\cdots,X_{f_{n}}|_{M_{f}}$ are symmetries of $X_{H}|_{M_{f}}$ and generate an Abelian (therefore solvable) Lie algebra with the Lie bracket of vector fields. Therefore, by Theorem \ref{teLie} we have that the solutions of the vector differential equation $\dot{x}=X_{H}(x)$ can be found by quadratures. In other words, the solutions of the Hamilton's equations of motion that live in $M_{f}$ can be found by quadratures.
\end{proof}

It is worth remarking that theorem \ref{tePI} is valid for arbitrary Hamiltonian systems, of course for normal Hamiltonian systems as well. In general, the reduced system defined by $X_{H}$ on $M_{f}$ is not a Hamiltonian system. For instance, observe that $M_{f}$ might be an odd dimensional manifold. Even though it may be even dimensional it does not necessarily inherit the symplectic structure on $M$. Also, the functions $f_{1},\cdots,f_{n}$ in theorem \ref{tePI} may be particular integrals or even global constants of motion. Indeed, the condition (\ref{eqPS}) is more general than the defining condition for a particular integral; the equation $\lbrace f_{i},H\rbrace=a_{i}f_{i}$ is a special case of equation $\lbrace f_{i},H\rbrace=a^{ij}_{1}f_{1}+\cdots+a^{ij}_{n}f_{n}$. If the functions $f_{1},\cdots,f_{n}$ are particular integrals satisfying condition (\ref{eqcuadratic}) (for example for a normal Hamiltonian system), then the equations of motion for the dynamical system defined by $X_{H}$ on $M_{f}$ are of the form $\dot{x}=constant$ so they are trivially solved by quadratures. For general Hamiltonian systems, natural or not, we propose the notion of particular integrability as follows: 
\begin{de}
A Hamiltonian system $(M,\omega,H)$ with $n$ degrees of freedom is said to be particularly integrable if there exist $n$ functionally-independent functions $H=f_{1},\cdots,f_{n}\in C^{\infty}(M)$ in particular involution.
\end{de}
For a Hamiltonian system with $n$ degrees of freedom the maximum number of functionally-independent global integrals in involution is $n$. The same result holds in the case of particular integrals, as stated in the following lemma 
\begin{prop}
For a given Hamiltonian system we have that the maximum number of independent functions in particular involution is equal to the number of degrees of freedom.
\end{prop}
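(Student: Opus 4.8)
The plan is to transplant, to the weaker setting of \emph{particular} involution, the classical pointwise argument that on a $2n$-dimensional symplectic manifold at most $n$ functionally independent functions can Poisson-commute. So suppose $f_{1},\cdots,f_{k}\in C^{\infty}(M)$ are functionally independent and in particular involution, $\lbrace f_{i},f_{j}\rbrace=a^{ij}_{1}f_{1}+\cdots+a^{ij}_{k}f_{k}$ with the $a^{ij}_{l}$ having real values on $M_{f}=\lbrace x\in M:\ f_{1}(x)=\cdots=f_{k}(x)=0\rbrace$; the goal is to show $k\le n$ and that the value $k=n$ actually occurs.

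First I would pick a point $p\in M_{f}$ at which $df_{1}|_{p},\ldots,df_{k}|_{p}$ are linearly independent — such a point exists by functional independence, under the same regular-value hypothesis already invoked in the proof of Theorem~\ref{tePI}. At this $p$, nondegeneracy of $\omega$ makes $v\mapsto v\lrcorner\omega_{p}$ an isomorphism $T_{p}M\to T_{p}^{*}M$ sending $X_{f_{i}}(p)$ to $df_{i}|_{p}$, so the vectors $X_{f_{1}}(p),\ldots,X_{f_{k}}(p)$ are linearly independent and span a $k$-dimensional subspace $W\subseteq T_{p}M$. The crucial step is to evaluate the symplectic form on these vectors via the definition of the Poisson bracket: $\omega_{p}\big(X_{f_{i}}(p),X_{f_{j}}(p)\big)=\lbrace f_{i},f_{j}\rbrace(p)=\sum_{l=1}^{k}a^{ij}_{l}(p)\,f_{l}(p)=0$, since $p\in M_{f}$ annihilates every $f_{l}$ while each $a^{ij}_{l}(p)$ is a finite real number by hypothesis. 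Hence $W$ is an isotropic subspace of the symplectic vector space $(T_{p}M,\omega_{p})$.

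I would then close the upper bound with the standard fact that an isotropic subspace of a $2n$-dimensional symplectic vector space has dimension $\le n$: isotropy gives $W\subseteq W^{\perp}$ for the symplectic orthogonal complement, and nondegeneracy gives $\dim W+\dim W^{\perp}=2n$, so $\dim W\le n$; thus $k=\dim W\le n$. For sharpness I would note — as already remarked just after the definition of particular involution — that any family in ordinary involution is a fortiori in particular involution; hence the momenta $p_{1},\ldots,p_{n}$ in Darboux coordinates (locally), or the first integrals of any completely integrable system (globally), furnish $n$ functionally independent functions in particular involution, so the maximum equals the number $n$ of degrees of freedom.

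The one point requiring care, and the only genuine obstacle, is the existence of a suitable $p\in M_{f}$: one must know that $M_{f}$ is nonempty and that $0$ is a regular value of $F=(f_{1},\ldots,f_{k})$, so that the differentials are independent \emph{there} — functional independence by itself only guarantees independence on an open dense subset of $M$, which a priori need not meet $M_{f}$. This is, however, precisely the standing regularity assumption already used throughout the particular-integrability section; once it is granted, the remainder is pure symplectic linear algebra carried out pointwise on $T_{p}M$, exactly as in the proof that a Lagrangian subspace is maximal isotropic.
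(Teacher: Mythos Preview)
Your proof is correct and rests on the same underlying observation as the paper's --- that at a point $p\in M_{f}$ the Hamiltonian vector fields $X_{f_{1}}(p),\ldots,X_{f_{k}}(p)$ are linearly independent and pairwise $\omega_{p}$-orthogonal --- but your packaging differs and is arguably cleaner. The paper argues by contradiction: assuming $s>n$ independent functions, it invokes the Regular Level Set Theorem to realise $M_{f}$ as a $(2n-s)$-dimensional submanifold, notes that each $X_{f_{i}}$ is tangent to $M_{f}$ (since $X_{f_{i}}f_{j}$ vanishes there), and then gets $s$ independent vectors in a space of dimension $2n-s<s$. You instead work directly in the symplectic vector space $T_{p}M$ and invoke the standard bound $\dim W\le n$ for an isotropic subspace; since $T_{p}M_{f}$ is precisely the symplectic orthogonal $W^{\perp}$ of your $W$, the paper's tangency statement and your isotropy statement are the same linear-algebraic fact $W\subseteq W^{\perp}$ seen from two sides. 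What your route buys is that it avoids the detour through the submanifold structure of $M_{f}$ and the contradiction format, and you also supply the sharpness half (that $k=n$ is actually attained), which the paper leaves implicit. Both proofs share, and you correctly flag, the same standing regularity assumption that $M_{f}$ meets the locus where the $df_{i}$ are independent.
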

\begin{proof}
Let us suppose that for a Hamiltonian system with $n$ degrees of freedom there exist more than $n$ functionally-independent functions $f_{1},\cdots,f_{s}$ with $s>n$ that satisfy condition (\ref{eqPS}). Since the functions $f_{1},\cdots,f_{s}$ are functionally-independent on $M$ then at any point $p$ in an open dense subset of $M$ the differential maps $df_{1}|_{p},\ldots,df_{n}|_{p}$ are linearly independent. This means that $p$ is a regular point of the differentiable function $F:N\longrightarrow \mathbb{R}^{s}$ defined by $F=(f_{1},\ldots,f_{s})$. Now, from the Regular Level Set Theorem \cite{Lee2012} we have that $M_{f}=\lbrace x\in M: \ f_{1}(x)=\cdots=f_{s}(x)=0\rbrace$ is a smooth submanifold of $M$ of dimension dim$(M)-s=2n-s<s$. The vector fields $X_{f_{1}},\ldots,X_{f_{s}}$ are tangent to the submanifold $M_{f}$ because $X_{f_{i}}f_{j}=\lbrace f_{j},f_{i}\rbrace$ vanishes on $M_{f}$. Also, the vector fields $X_{f_{1}}|_{M_{f}},\ldots,X_{f_{s}}|_{M_{f}}$ on $M_{f}$ are linearly independent at any point $p$ in an open dense subset of $M$, so they generate a vector subspace of $T_{p}M_{f}$ of dimension $s$ but the dimension of $T_{p}M_{f}$ is $2n-s$ which is less that $s$, hence, we have a contradiction. Therefore, we conclude that the maximum number of functionally-independent functions in particular involution is $n$ (the number of degrees of freedom).
\end{proof}

\subsection*{Particular integrability in the two-body Coulomb problem in a constant magnetic field}

An integrable system in the sense of Liouville (when the number of degrees of freedom equals the amount of independent constants of motion in involution \cite{AKN2006,BBT2003,Das}) is particularly integrable; but the converse statement is not necessarily true. For example, let us consider in $\mathbb{R}^2$ the classical analog of the planar hydrogen atom in a perpendicular constant magnetic field. This system consists in two Coulomb charges of opposite sign moving on a plane with an attractive interaction $-\frac{1}{r}$. Additionally, they are subjected to the presence of an external constant magnetic field.  
First, we introduce collective center-of-mass and relative variables in the corresponding Hamiltonian. The so called total Pseudomomentum ${\bbf{K}}$ \cite{Gorkov,Avr78} is a global conserved quantity
associated with the center-of-mass motion which
can be used to rewrite the Hamiltonian in a transparent and physically simple form
\begin{equation}
\label{Hmf}
{\cal H} \ = \ {\cal H}_{\rm CM} \ + \  {\cal H}_{\rm coupling} \ + \ {\cal H}_{\rm rel}    \ ,
\end{equation}
where
\begin{equation}
\begin{aligned}
 & {\cal H}_{\rm CM} \ = \  \frac{1}{2\,M}\,\bbf{K}^2 
\\ &
{\cal H}_{\rm coupling} = \frac{e}{M}\left({\bbf{B}}\times {\bbf{K}}\right).{\bbf{r}}
\\ &
{\cal H}_{\rm rel} = \frac{1}{2\,\mu}\left({\bbf{p}}-\frac{e_{\rm eff}}{2}{\bbf{B}}\times
{\bbf{r}}\right)^2 \ + \ \frac{e^2\,B^2\,r^2}{2\,M} 
\ - \ \frac{e^2}{r} \ ,
\end{aligned}
\end{equation}
here $\mu=\frac{m_1\,m_2}{M}$ and $M=m_1+m_2$ are the reduced and total mass of the system, respectively, $(-e)$ is the electron charge, $m_1$ the electron mass, $e_{\rm eff}=e\,\frac{m_1-m_2}{M}$ is an effective charge and ${\bbf{B}}$ is the constant
magnetic field vector which is assumed to be perpendicular to the plane (along the $z$-axis). The variables
$({\bbf{R}},{\bbf{K}})$ and $({\bbf{r}},{\bbf{p}})$ are the canonical pairs
for the CM and internal motion, respectively. The corresponding phase space is eight-dimensional. 

This fundamental physical system (\ref{Hmf}) possesses three global integrals of motion: the total Hamiltonian $\cal H$ and the two components of the Pseudomomentum ${\bbf{K}}$. Nevertheless, the system is known to be chaotic; a fourth global integral does not exist. However, in the special case ${\bbf{K}}=\bbf{0}$ (the system at rest) the $z-$component $\ell_z= ({\bbf{r}}\times
{\bbf{p}})_z$ of the relative angular momentum turns out to be a conserved quantity (i.e., only at ${\bbf{K}}=\bbf{0}$ the Poisson bracket $\lbrace \ell_z,\,{\cal H}\rbrace$ vanishes) and the system becomes particularly integrable \cite{MA2013}. The corresponding quantities in particular involution are given by $({\cal H},\,{\bbf{K}}=\bbf{0},\,\ell_z)$.

\section{$N-$body system: particular integrals and symmetry reduction }

In general, in the Hamiltonian of an $N-$body system in $\mathbb{R}^d$ the scalar potential only depends on a certain number of effective variables $n_0 < n$, where $n= N\,d$ is the number of degrees of freedom. For instance, in the case of symmetrically-invariant potentials (assuming $d > N-2$) $n_0$ is nothing but the $\frac{N\,(N-1)}{2}$ relative distances between bodies. If we are interested in the trajectories (solutions of the Hamilton's equations of motion) that solely depend on these relative distances, certain particular integrals can be used to construct a reduced Hamiltonian which precisely governs their dynamics \cite{AM2021, AM2022, Miller2018}.

Concretely, let us we consider $N$ particles moving in a $d-$dimensional Euclidean space $\mathbb{R}^d$. The Hamiltonian of the system takes the form
\begin{equation}
\label{HN}
    {\cal H} \ = \ {\cal T} \ + \ V \ = \ \sum_{i=1}^N \frac{\mathbf{p}_i^2}{2\,m_i} \ + \ V(|\mathbf{r}_i-\mathbf{r}_j|) \ ,
\end{equation}
where $\mathbf{r}_i$, $i=1,2,\ldots,N$, is the vector position of particle $i$, $m_i$ its mass and $\mathbf{p}_i$ is the associated canonical momentum. The dimension of the phase space $\Gamma$ is dim$\,\Gamma=2\,N\,d$. It is assumed that the potential $V=V(r_{ij})$ solely depends on the relative distances between particles, $r_{ij}\equiv |\mathbf{r}_i-\mathbf{r}_j| $. 

Since $V=V(r_{ij})$, the total linear momentum 
\[
\mathbf{P} \ = \ \mathbf{p}_1 \ + \ \mathbf{p}_2 \ + \ldots \ \mathbf{p}_N \ ,
\]
of the center of mass vectorial variable
\[
\mathbf{R} \ = \ \frac{m_1\,\mathbf{r}_1+m_2\,\mathbf{r}_2+\ldots+ m_N\,\mathbf{r}_N}{m_1+m_2+\ldots+m_N} \ ,
\]
is a first integral of motion (in the Liouville sense), its Poisson bracket with (\ref{HN}) is identically zero $\{ \mathbf{P},\,  {\cal H} \}=0$. The dynamics of the variables ($\mathbf{R},\,\mathbf{P}$) corresponds to that of a free particle, and it is fully decoupled from the dynamics of the internal relative motion of the system. Thus, the center-of-mass motion can be separated out completely, and the internal relative motion of the system is characterized by $d(N-1)$ degrees of freedom. The dimension of the corresponding (reduced) phase space
\begin{equation}
\label{}
\Gamma_{\rm rel} \ \equiv \ \Gamma(\mathbf{R}={\mathbf{0}},\,\mathbf{P}={\mathbf{0}}) \ , 
\end{equation}
is dim$\,\Gamma_{\rm rel}=2\,d\,(N-1)$. In this space of relative motion, let be
\begin{equation}
\{q_1,\,q_2,\,q_3,\ldots,\,q_{d(N-1)}\} \ ,
\end{equation}
any set of generalized coordinates. In particular, one possibility is to take the $(N-1)$ vectorial Jacobi coordinates defined by the linear relations
\begin{equation}
\label{Jacobi}
\begin{aligned}
    & {\bf r}^{(J)}_{1} \ = \ \sqrt{\frac{m_1\,m_2}{m_1+m_2}}\left({\bf r}_{2}\ - \ {\bf r}_{1}\right)
\\ &
{\bf r}^{(J)}_{2} \ = \ \sqrt{\frac{(m_1+m_2)\,m_3}{m_1+m_2+m_3}}\left({\bf r}_{3}\ - \ \frac{m_1\,{\bf r}_{1}+m_2\,{\bf r}_{2}}{m_1+m_2}\right)
\\ &
\vdots
\\ &
{\bf r}^{(J)}_{(N-1)} \ = \ \sqrt{\frac{(m_1+m_2+\ldots\,m_{N-1})\,m_N}{m_1+m_2+\ldots\,m_N}}\left({\bf r}_{N}\ - \ \frac{m_1\,{\bf r}_{1}+m_2\,{\bf r}_{2}+ \ldots+m_N\,{\bf r}_{N-1}}{m_1+m_2+\ldots\,m_{N-1}}\right)
     \  .
\end{aligned}
\end{equation}
In these coordinates, the $d(N-1)$-dimensional kinetic energy term ${\cal T}_{\rm rel}$ of the relative motion becomes diagonal. Explicitly,
\begin{equation}
\label{Tflat-diag}
   2\,{\cal T}\ = \ {\cal T}_{\rm center-of-mass}^{(d)} \ + \ {\cal T}_{\rm rel}^{(d(N-1))} \ = \ {{\bf P}}_{{}_{\mathbf R}}^{2} \ + \ {({\bf p}^{(J)}_{1})}^2 \ + \ {({\bf p}^{(J)}_{2})}^2\ + \ldots + \ {({\bf p}^{(J)}_{N-1})}^2 \ ,
\end{equation}
here ${\bf p}^{(J)}_{i}$ is the $d-$dimensional canonical momemtum associated with the vector ${\bf r}^{(J)}_{i}$. The kinetic energy of relative motion is the sum of kinetic energies in the Jacobi coordinate directions. Therefore, we arrive to the reduced Hamiltonian 
\begin{equation}
\label{Hred}
    {\cal H}_{\rm rel} \ = \ {\cal T}_{\rm rel}^{(d(N-1))} \ + \ V \ .
\end{equation}
Now, assuming $d > N-2$, in the space of relative motion parametrized by Jacobi-vectorial variables we introduce a further change of coordinates
\begin{equation}
\label{CC4b}
\{{\bf r}^{(J)}_{1},\,{\bf r}^{(J)}_{2},\,\ldots,\,{\bf r}^{(J)}_{N-1}\}\ \Rightarrow \
\{\, \{ r_{ij}\},\,\{\phi_k\}\,  \} \ ,
\end{equation}
where $r_{ij}$ are the $\frac{N(N-1)}{2}$ relative distances between particles whereas the collection \{$\phi_k$\}, $k=1,2,3,\ldots,\frac{1}{2} (N-1) (2 d-N)$, stands for any set of variables such that the transformation (\ref{CC4b}) is canonical.  For the potential $V$ does not involve the $\phi-$variables, it follows that the hypersurface
\begin{equation}
\label{IC4b}
\Gamma_r \ \equiv \ \Gamma_{\rm rel}(\, p_{_{\phi_1}} =\,p_{_{\phi_2}}=\,\ldots\,= \,p_{_{\phi_{\frac{1}{2} (N-1) (2 d-N)}}}=0\,) \quad ; \, \quad  \Gamma_r \ \subset \ \Gamma_{\rm rel} \subset \Gamma \ , 
\end{equation}
is an invariant manifold. In other words, any trajectory of ${\cal H}_{\rm rel}$ (\ref{Hred}) with initial conditions $\ p_{_{\phi_1}} =\,p_{_{\phi_2}}=\,p_{_{\phi_3}}=\,\ldots\,= \,p_{_{\phi_{\frac{1}{2} (N-1) (2 d-N)}}}=0$ will remain in $\Gamma_r$ under time evolution. It should be noted that the $\phi-$variables are mostly not cyclic coordinates. In general, they will appear in the kinetic energy term ${\cal T}_{\rm rel}^{(d(N-1))}$ explicitly. Hence, unlike the center-of-mass momentum $\mathbf{P}$, the associated momenta $p_{\phi}$ are not first Liouville integrals but \textit{particular integrals} of motion. They are conserved quantities for specific initial conditions only. The trajectories possessing zero total angular momentum necessarily lie in $\Gamma_r$.

\subsection*{The $\rho-$representation}

Explicitly, in the coordinates $\rho_{ij}=r^2_{ij}$, the reduced Hamiltonian (\ref{Hred}) reads
\begin{equation}
\label{HRN1}
  {\cal H}_{\rm rel}^{(d(N-1))}=\  2\,\sum_{i<j}^N \bigg(\frac{m_i+m_j}{m_i m_j} \bigg) \rho_{ij}\, p^2_{\rho_{ij}}\ +\ \sum_{i \neq j,i\neq k, j< k}^N\,\frac{2}{m_i}\,(\rho_{ij} + \rho_{ik} - \rho_{jk})\,p_{\rho_{ij}}\,p_{\rho_{ik}}\ +\ \Omega \ + \ V  \ ,
\end{equation}
where every term in $\Omega=\Omega(p_{\rho},\,p_{\phi},\,\rho,\,\phi)$ is either a product of 2 \textit{angular} momenta $p_{\phi}$ or a product of a radial momentum $p_{\rho}$ and an angular momentum \cite{Miller2018}.
The corresponding Hamilton-Jacobi equation is
\begin{equation}
\label{HJa}
 {\cal H}_{\rm rel} ^{(d(N-1))}(p_{\rho},\,p_{\phi},\,\rho,\,\phi)\ =\ E\ ,
\end{equation}
where
\[
   p_{\rho_{ij}}=\frac{\pa W}{\pa \rho_{ij}} \qquad ; \quad  p_{\phi_k}\
   =\ \frac{\partial W}{\pa{\phi_k}}\ ,
\]
for the Hamilton's characteristic function $W=W(\rho_{ij},\phi_k)$. Now, using the particular integrals $p_\phi$ one can construct, on the invariant manifold
$\Gamma_r$ (\ref{IC4b}), a further reduced Hamiltonian ${\cal H}_{\rho}$ given by \cite{Miller2018}

\begin{equation}
\label{HRNrho}
  {\cal H}_{\rho}\ = \  2\,\sum_{i<j}^N \bigg(\frac{m_i+m_j}{m_i m_j} \bigg) \rho_{ij}\, p^2_{\rho_{ij}}\ +\ \sum_{i \neq j,i\neq k, j< k}^N\,\frac{2}{m_i}\,(\rho_{ij} + \rho_{ik} - \rho_{jk})\,p_{\rho_{ij}}\,p_{\rho_{ik}}\ + \ V(\rho_{ij})  \ ,
\end{equation}
which can be considered as describing a  classical top with variable tensor of inertia
in an external potential \cite{Arnold89}. The form of (\ref{HRNrho}) is independent of the set of variables \{$\phi_k$\} used in (\ref{HRN1}). For the Hamiltonian ${\cal H}_{\rho}$, the number of degrees of freedom is equal to the number of relative distances between particles, namely $\frac{N(N-1)}{2}$. They are nothing but the variables on which the original potential $V$ depends. It governs all the trajectories of the $N-$body system possessing zero total angular momentum. Therefore, assuming $d > N-2$, in the transformations ${\cal H} \rightarrow {\cal H}_{\rm red} \rightarrow {\cal H}_{\rho} $ the number of degrees of freedom is reduced from $2\,d\,N$ to $\frac{N(N-1)}{2}$. In the Appendix \ref{rhorep}, we present the explicit form of ${\cal H}_{\rho}$ for the lowest cases $N=2,3,4$.

\subsection*{The volume-variable representation}

Finally, we describe one more reduction elaborated in \cite{AM2021, AM2022, Miller2018}. To this end, we first introduce the notion of the $(N-1)-$dimensional non-degenerate polytope of interaction. Explicitly, it is the polytope whose $N$-vertices are identified with the individual vector positions of the $N$ particles. In this case, the edges of the polytope coincide with the relative distances $r_{ij}$ between particles. The goal is to use the geometric properties of this object to define new generalized coordinates that we call \textit{volume-variables}. These variables together with suitable particular integrals will be used to construct, from ${\cal H}_{\rho}$ (\ref{HRNrho}), another reduced Hamiltonian ${\cal H}_{\rho} \rightarrow {\cal H}_{\rm vol}$.

Now, the polytope of interaction possesses elements of different dimensionality such as vertices, edges, faces, cells. Concretely, an $(N-1)$-dimensional polytope is bounded by a number of ($N-2$)-dimensional facets, which are themselves polytopes whose facets are ($N-3$)-dimensional ridges of the original polytope and so on. In particular, the content of a polytope, as well as of any other of its lower-dimensional facets, is a generalized volume, namely a \textit{hypervolume}. 

The first volume variable ${\cal V}_{N-1}$ is the content (squared) of the polytope. It is completely defined by the relative distances $r_{ij}$ and can be computed using the Cayley-Menger determinant. At fixed $N$, the dimension of ${\cal V}_N$ is given by $[{\rm distance}]^{2(N-1)}$. The second volume variable ${\cal V}_{N-2}$, is the sum of all the contents (squared) of their $(N-2)-$dimensional facets. Clearly, dim(${\cal V}_{N-2}$)$=[{\rm distance}]^{2(N-2)}$. Similarly, we define the variable ${\cal V}_{N-k}$, here $k=3,4,\ldots,(N-1)$. In total, the number of volume variables is equal to $(N-1)$, and all of them can be expressed in terms of the variables $r_{ij}$ using linear combinations of Cayley-Menger determinants. In addition to the original assumption $V=V(r_{ij})$, in this representation we consider potentials $V=V({\cal V}_{N-k})$ which solely depends on the volume variables. Therefore, assuming $d > N-2$, in the transformations ${\cal H} \rightarrow {\cal H}_{\rm red} \rightarrow {\cal H}_{\rho}\rightarrow {\cal H}_{\rm vol} $ the number of degrees of freedom is effectively reduced from $2\,d\,N$ to $N-1$.

In the Appendix \ref{volrep}, for the lowest cases $N=2,3,4,5,6$ we display the form of the reduced Hamiltonian ${\cal H}_{\rm vol}$.

\section{Conclusions}

In this study, we have presented the notion of a particular integral from a geometric point of view. For natural Hamiltonian systems, this concept plays an important role in the partial reduction of the Hamilton's equations. Especially, within the formalism of symplectic geometry, the concept of complete particular integrability was treated in detail using an adaptive application of the classical results due to Lie and Liouville. Any complete integrable system is  particularly integrable but even a non-linear non-integrable system may exhibit certain regular regions where the property of particular integrability occurs.
To illustrate the key aspects of the present symplectic theory approach, physically significant systems were studied in detail.
For future work there are interesting open questions. For instance, the notion of particular integrability in the cases of cosymplectic, contact, and cocontact geometries. An approach to developing a theory of particular integrability in the quantum case (where a natural geometric framework is not evident) remains unsolved as well. It will be done elsewhere.

\section{Acknowledgments}

The author R. Azuaje wishes to thank CONACYT (Mexico) for the financial support through a postdoctoral fellowship under the program Estancias Posdoctorales por México 2022. In addition, the author A. M. Escobar Ruiz wishes to thank Prof. Turbiner for useful discussions and helpful comments.

\appendix

\section{Symmetry reduction of the $N-$body system using particular integrals}

\subsection{Reduced Hamiltonian ${\cal H}_\rho$}
\label{rhorep}

\subsubsection*{Case $N=2$}

In the simplest two-body case $N=2$, for any $d$, the reduced Hamiltonian ${\cal H}_{\rho}$ (\ref{HRNrho}) in the space of relative motion takes the form 

\begin{equation}
\label{HRNrhoN2}
  {\cal H}_{\rho}\ = \  2\,\bigg(\frac{m_1+m_2}{m_1\, m_2} \bigg) \rho_{12}\, p^2_{\rho_{12}}\  + \ V(\rho_{12})  \ ,
\end{equation}
where $\rho_{12}=r_{12}^2$ is the relative distance (squared) between particles. It correspond to the well-known reduced Hamiltonian for the two-body central-force problem at zero angular momentum. In this case, ${\cal H}_{\rho}$ (\ref{HRNrhoN2}) plays the role of a particular integral of motion. It is conserved on the invariant manifold of zero angular momentum only.

\subsubsection*{Case $N=3$}

In the three-body case $N=3$, for any $d>1$, the reduced Hamiltonian ${\cal H}_{\rho}$ (\ref{HRNrho}) reads \cite{AM2021}

\begin{equation}
\label{HRNrhoN3}
\begin{aligned}
 & {\cal H}_{\rho}\ = \  2\,\bigg(\frac{m_1+m_2}{m_1\, m_2} \bigg) \rho_{12}\, p^2_{\rho_{12}}\  + \ 2\,\bigg(\frac{m_1+m_3}{m_1\, m_3} \bigg) \rho_{13}\, p^2_{\rho_{13}}\  + \ 2\,\bigg(\frac{m_2+m_3}{m_2\, m_3} \bigg) \rho_{23}\, p^2_{\rho_{23}}
 \\ &
 \  + \ \frac{2}{m_1}\,(\rho_{12} + \rho_{13} - \rho_{23})\,p_{\rho_{12}}\,p_{\rho_{13}}\  + \ \frac{2}{m_2}\,(\rho_{12} + \rho_{23} - \rho_{13})\,p_{\rho_{12}}\,p_{\rho_{23}}
\\ &
 \  + \ \frac{2}{m_3}\,(\rho_{13} + \rho_{23} - \rho_{12})\,p_{\rho_{13}}\,p_{\rho_{23}}
\ + \ V(\rho_{12},\,\rho_{13},\,\rho_{23})  \ .
\end{aligned}
\end{equation}
In this case, from the original $4d-$dimensional problem, $d>1$, we arrive to the above 3-dimensional Hamiltonian, solutions of which are also solutions of the original one \cite{AM2021}. 

\subsubsection*{Case $N=4$}

In a similar manner, for the 4-body system we immediately arrive at the reduced Hamiltonian \cite{AM2022}

\begin{equation}
\label{HredrhoN4}
\begin{aligned}
  {H}_{\rho} \  & = \ 2\,\bigg[\,\frac{\rho_{12}\,p_{\rho_{12}}^2}{m_{12}} \ + \ \frac{\rho_{13}\,p_{\rho_{13}}^2}{m_{13}} \ + \  \frac{\rho_{14}\,p_{\rho_{14}}^2}{m_{14}} \ + \   \frac{\rho_{23}\,p_{\rho_{23}}^2}{m_{23}} \ + \  \frac{\rho_{24}\,p_{\rho_{24}}^2}{m_{24}} \ + \ \frac{\rho_{34}\,p_{\rho_{34}}^2}{m_{34}} \ + \ \frac{\rho_{12}+\rho_{13}-\rho_{23}}{m_1}\,p_{\rho_{12}}\,p_{\rho_{13}}
\\ &
\qquad   \ +
  \ \frac{\rho_{12}+\rho_{23}-\rho_{13}}{m_2}\,p_{\rho_{12}}\,p_{\rho_{23}}\ + \
\frac{\rho_{23}+\rho_{13}-\rho_{12}}{m_3}\,p_{\rho_{23}}\,p_{\rho_{13}}
\ + \
\frac{\rho_{24}+\rho_{14}-\rho_{12}}{m_4}\,p_{\rho_{24}}\,p_{\rho_{14}}
\\ &
\qquad   \ + \
   \frac{\rho_{12}+\rho_{14}-\rho_{24}}{m_1}\,p_{\rho_{12}}\,p_{\rho_{14}} \ + \
   \frac{\rho_{12}+\rho_{24}-\rho_{14}}{m_2}\,p_{\rho_{12}}\,p_{\rho_{24}}\ + \
   \frac{\rho_{13}+\rho_{34}-\rho_{14}}{m_3}\,p_{\rho_{13}}\,p_{\rho_{34}}
   \\ &
\qquad   \ + \
   \frac{\rho_{14}+\rho_{34}-\rho_{13}}{m_4}\,p_{\rho_{14}}\,p_{\rho_{34}}  \ + \
   \frac{\rho_{13}+\rho_{14}-\rho_{34}}{m_1}\,p_{\rho_{13}}\,p_{\rho_{14}} \ + \
   \frac{\rho_{23}+\rho_{24}-\rho_{34}}{m_2}\,p_{\rho_{23}}\,p_{\rho_{24}}
   \\ &
\qquad   \ + \
   \frac{\rho_{23}+\rho_{34}-\rho_{24}}{m_3}\,p_{\rho_{23}}\,p_{\rho_{34}} \ + \
   \frac{\rho_{24}+\rho_{34}-\rho_{23}}{m_4}\,p_{\rho_{24}}\,p_{\rho_{34}}   \,\bigg] \ + \  V(\rho_{ij})     \ ,
\end{aligned}
\end{equation}
with $m_{ij}\equiv \frac{m_i\,m_j}{m_i+m_j}$. Again, (\ref{HredrhoN4}) plays the role of a particular integral of motion.

\subsection{Reduced Hamiltonian ${\cal H}_{\rm vol}$}
\label{volrep}

For simplicity, we assume that all $N-$particles have the same mass $m_i=1$, $i=1,2,\ldots,N$.

\subsubsection*{Case $N=2$}

In the case $N=2$, there is only one volume variable, namely ${\cal V}_{1}=\rho_{12}=r_{12}^2$. Therefore, at $N=2$ the two Hamiltonians ${\cal H}_{\rho}$ (see (\ref{HRNrhoN2})) and ${\cal H}_{\rm vol}$ coincide.

\subsubsection*{Case $N=3$}

At $N=3$, there are only two volume variables given by
\begin{equation}
\begin{aligned}
& {\cal V}_{1} \ = \ \rho_{12} \ + \ \rho_{13} \ + \ \rho_{23}
\\ &
{\cal V}_{2} \ = \ \frac{1}{16}\big(\,  2\,\rho_{12}\,\rho_{23} \ +\  2\,\rho_{12}\,\rho_{13} \ +\  2\,\rho_{23}\,\rho_{13} \ -\  \rho_{12}^2 \ -\  \rho_{23}^2 \ -\  \rho_{13}^2 \,\big)
\end{aligned}
\end{equation}
respectively. In particular, ${\cal V}_{2}$ is the area (squared) of the triangle of interaction formed by taking the individual vector positions of the three particles as vertices. In these variables, assuming $V=V({\cal V}_{1},{\cal V}_{2})$, we obtain the reduced Hamiltonian \cite{AM2021}

\begin{equation}
\label{}
  {\cal H}_{\rm vol}\ = \ 6\,{\cal V}_{1}\,P_1^2 \ + \  \frac{1}{2}{\cal V}_{1}\,{\cal V}_{2}\,P^2_2 \ + \   24\,{\cal V}_{2}\,P_1\,P_2  \ + \ V({\cal V}_{1},{\cal V}_{2})\ ,
\end{equation}
here $P_k\equiv P_{{\cal V}_{k}}$, $k=1,2$, denote the corresponding canonical conjugate momentum variables.  

\subsubsection*{Case $N=4$}

In the next case $N=4$, there are three volume variables:

\begin{equation}
\label{VVV}
\begin{aligned}
{\cal V}_3 \ & = \ \frac{1}{144} \bigg[\, \rho_{23} \rho_{34} \rho_{12}+ \rho_{24} \rho_{34} \rho_{12}+\rho_{14} \rho_{23} \rho_{12}
+ \rho_{13} \rho_{24} \rho_{12} \ + \  \rho_{13} \rho_{34} \rho_{12}+ \rho_{14} \rho_{34} \rho_{12}
\\ &
 \ + \ \rho_{13} \,\rho_{14}\, \rho_{23}+ \rho_{13} \rho_{14} \rho_{24}+ \rho_{13} \rho_{23} \rho_{24}+ \rho_{14} \rho_{23} \rho_{24} + \rho_{14} \rho_{23} \rho_{34}+\rho_{13}\, \rho_{24} \,\rho_{34}
\\ &
\ - \  \rho_{14}\, \rho_{23}^2- \rho_{13} \,\rho_{24}^2- \rho_{14}^2 \,\rho_{23}-\rho_{34} \,\rho_{12}^2- \rho_{34}^2 \,\rho_{12}- \rho_{13}^2 \,\rho_{24}
\ - \ \rho_{13} \, \rho_{14} \,\rho_{34}
\\ &
\ - \  \rho_{23} \,\rho_{24}\, \rho_{34}\ - \ \rho_{13}\, \rho_{23} \,\rho_{12}\ - \  \rho_{14}\, \rho_{24}\, \rho_{12}\,\bigg] \ ,
\end{aligned}
\end{equation}

\begin{equation}
\label{}
\begin{aligned}
{\cal V}_2 \  = \ & \frac{1}{16} \,\bigg[ \big(2 \rho_{13} \,\rho_{12}+2 \rho_{23}\, \rho_{12}+2 \rho_{13}\, \rho_{23}-\rho_{13}^2-\rho_{23}^2-\rho_{12}^2  \big) \ + \
\\ &
\big( 2 \rho_{14}\, \rho_{12}+2\rho_{24}\, \rho_{12}+2 \rho_{14}\, \rho_{24}-\rho_{12}^2-\rho_{14}^2-\rho_{24}^2\big) \ + \
\\ &
\big(2 \rho_{14}\, \rho_{13}+2 \rho_{34}\, \rho_{13}+2 \rho_{14}\, \rho_{34}-\rho_{14}^2-\rho_{34}^2 -\rho_{13}^2\big) \ +\
\\ &
\big( 2 \rho_{24}\, \rho_{23}+2 \rho_{34}\, \rho_{23}+2 \rho_{24}\, \rho_{34}-\rho_{24}^2-\rho_{34}^2-\rho_{23}^2 \big)\,\bigg] \nonumber  \ ,
\end{aligned}
\end{equation}

\begin{equation}
\label{}
{\cal V}_1 \ = \ \rho_{12} \ + \ \rho_{13} \ + \ \rho_{14} \ + \ \rho_{23} \ + \ \rho_{24} \ + \ \rho_{34}   \nonumber  \ .
\end{equation}

In these variables, assuming $V=V({\cal V}_{1},{\cal V}_{2},,{\cal V}_{3})$, we arrive to the following reduced Hamiltonian \cite{AM2022}:

\begin{equation}
\begin{aligned}
\label{}
{\cal H}_{\rm vol} \ = \
 & \ 8\,{\cal V}_{1}\,P^2_{1} \ + \  \frac{1}{2}\, \big({\cal V}_{1}\,{\cal V}_{2}\,+\,108\,{{\cal V}_{3}}\big)\,P^2_{2}\ + \  \frac{2}{9}\,{\cal V}_{2}\,{\cal V}_{3}\,P^2_{3}
\\
 &  + \ 32\,{\cal V}_{2}\,P_1\,P_2\ + \ 48\,{{\cal V}_{3}}\,P_1\,P_3\ + \ 2\,{\cal V}_{1}\, {\cal V}_{3}\,P_2\,P_3   \ .
\end{aligned}
\end{equation}
 For the cases $N=2,3,4$ the reduced Hamiltonian ${\cal H}_{\rm vol}$ was known \cite{Miller2018}. Below, we present new results\footnote{They are in complete agreement with unpublished results obtained in the quantum case (private correspondence with A. V. Turbiner and W. Miller Jr.) .} for the cases $N=5$ and $N=6$.

\subsubsection*{Case $N=5$}

For the case $N=5$, there are four volume variables. The corresponding reduced Hamiltonian takes the form:

\[
{\cal H}_{\rm vol}  \, = \,10\,{\cal V}_{1}\,P^2_{1}\ + \ \frac{1}{2}\,\bigg({\cal V}_{1}\,{\cal V}_{2}+135\,{\cal V}_{3}\,\bigg)\,P^2_{2}
  \ + \ \frac{2}{9}\,({\cal V}_{2}\,{\cal V}_{3} \ + \ 12 {\cal V}_{1}\,{\cal V}_{4})\,P^2_{3}
  \ + \ \frac{1}{8}\,{\cal V}_{3}\,{\cal V}_{4}\,P^2_{4}
\]
\begin{equation}
\begin{aligned}
\label{}
& + \ 40\,{\cal V}_{2}\,P_1\,P_2\ + \ 60\,{\cal V}_{3}\,P_1\,P_3\ + \ 80\,{\cal V}_{4}\,P_1\,P_4
  \ + \ 2({\cal V}_{1}\,{\cal V}_{3}\,+\,160\,{\cal V}_{4})\,P_2\,P_3\ + \  3\,{\cal V}_{1}\,{\cal V}_{4}\,P_2\,P_4
  \ + \ \frac{8}{9}{\cal V}_{2}\,{\cal V}_{4}\,P_3\,P_4 \ .
\end{aligned}
\end{equation}

\subsubsection*{Case $N=6$}

Finally, for the case $N=6$, there are five volume variables. The 5-dimensional reduced Hamiltonian is given by:

\begin{equation}
\begin{aligned}
\label{}
{\cal H}_{\rm vol} & \ = \ 12\,{\cal V}_{1}\,P^2_{1}\ + \  \big(\frac{1}{2}\,{\cal V}_{1}\,{\cal V}_{2}\,+\,81\,{\cal V}_{3}\big)\,P^2_{2}\ + \     \big(\frac{2}{9}\,{\cal V}_{2}\,{\cal V}_{3}+\frac{8}{3}\,{\cal V}_{1}\,{\cal V}_{4}+\frac{2000}{3}\,{\cal V}_{5}\big)\,P^2_{3} \ + \
\\ &
\big(\frac{1}{8}\,{\cal V}_{3}\,{\cal V}_{4} + \frac{25}{24}\, {\cal V}_{2} {\cal V}_{5}\big)\,P^2_{4}\ + \  \frac{2}{25}\, {\cal V}_{4}\,{\cal V}_{5}\,P^2_{5}\ + \ 48\,{\cal V}_{2}\,P_1\,P_2\ + \ 72\,{\cal V}_{3}\,P_1\,P_3\ + \ 96\,{\cal V}_{4}\,P_1\,P_4\ + \
\\ &
 120\,{\cal V}_{5}\,P_1\,P_5\ + \ 2\,({\cal V}_{1}\,{\cal V}_{3}\,+\,192\,{\cal V}_{4})\,P_2\,P_3\ + \  (3\,{\cal V}_{1}\,{\cal V}_{4} + 750\, {\cal V}_{5})\,P_2\,P_4\ + \
 4\, {\cal V}_{1}\,{\cal V}_{5}\,P_2\,P_5\ + \
\\ &
\big(\frac{8}{9}{\cal V}_{2}\,{\cal V}_{4} + \frac{100}{9}\, {\cal V}_{1}\, {\cal V}_{5}\big)\,P_3\,P_4\ + \  \frac{4}{3}\, {\cal V}_{2}\,{\cal V}_{5}\,P_3\,P_5\ + \  \frac{1}{2}\,{\cal V}_{3}\,{\cal V}_{5}\,P_4\,P_5 \ .
\end{aligned}
\end{equation}


\begin{thebibliography}{99}

\bibitem{Landau} L. Landau and E. Lifshitz, \textit{Mechanics}, 3rd ed. (Pergamon Press, 1976), Vol. 1.

\bibitem{Arnold89} V. Arnold, \textit{Mathematical Methods of Classical Mechanics}, second edition, Springer 1989.

\bibitem{MacKay} R. S. MacKay and J. D. Meiss, \textit{Hamiltonian Dynamical Systems: a reprint selection}, London, Adam-Hilgar Press 1987.

\bibitem{Nek72}
N. N. Nekhoroshev, \textit{Action-angle variables and their generalizations}, {Trans. Moscow. Math. Soc.}, {\bf{26}} 181--198, (1972).

\bibitem{MillerPostWinternitz:2013}
W. Miller Jr., S. Post and P. Winternitz, \textit{Classical and quantum superintegrability with applications}, J. Phys. A-Math. Gen. {\textbf 46} 423001 (2013).

\bibitem{BBT2003} O. Babelon, D. Bernard and M. Talon, \textit{Introduction to Classical Integrable Systems}. Cambridge University Press (2003).

\bibitem{HW}
 J. Harnad, P. Winternitz and G. Sabidussi, (Editors), \textit{Integrable Systems: From Classical to Quantum}, CRM Proceedings \& Lecture Notes {\bf{26}}
{American Mathematical Society}. ISBN 0-8218-2093-1. (2000).

\bibitem{bolsinov2004}
A.V. Bolsinov and A.T. Fomenko, \textit{Integrable Hamiltonian Systems: Geometry, Topology, Classification}, CRC Press. ISBN 0203643429. (2004).

\bibitem{Bertrand1873}
J.L.F. Bertrand, \textit{{T}h\'eor\`eme relatif au mouvement d'un point attir\'e vers un centre fixe}, 
{C. R. Acad. Sci.} {\textbf 77} (1873) 849--853.

\bibitem{GPS2002} H. Goldstein, C. Poole and J. Safko, \textit{Classical Mechanics}, third edition. Addison-Wesley (2002).

\bibitem{Laplace}
P.S. Laplace,  \textit{Trait\'{e} de m\'{e}canique c\'{e}leste},  Duprat, Paris. (1799).

\bibitem{Lenz}
 W. Lenz, \textit{\"{U}ber den Bewegungsverlauf und Quantenzustände der gest\"{o}rten Keplerbewegung}, 
{Zeitschrift f\"{u}r Physik} {\textbf 24} no.1 (1924) 197-207.

\bibitem{Runge}
 C. Runge, \textit{Vektoranalysis I}, Leipzig: Hirzel. (1919).
 
\bibitem{Fradkin}
D. M. Fradkin, \textit{Three-dimensional Isotropic Harmonic Oscillator and $SU(3)$}, {Am. J. Phys.} {\textbf 33}, 207-211, (1965).


\bibitem{AM2021}  A. M. Escobar-Ruiz,  R. Linares, A. V. Turbiner and Willard Miller Jr., \textit{Classical $n$-body system in geometrical and volume variables I: Three-body case}, International Journal of Modern Physics A {\bf{36}}, 2150140 (2021).

\bibitem{AM2022}  A. M. Escobar-Ruiz and Alexander V. Turbiner, \textit{Classical $n$-body system in volume variables II: Four-body case}, International Journal of Modern Physics A {\bf{37}}, 2250209 (2022).

\bibitem{ParIntTurbiner2013} Alexander V. Turbiner, \textit{Particular integrability and (quasi)-exact-solvability}, J. Phys. A: Math. Theor., {\bf{46}} 025203 (2013).

\bibitem{TurbinerVieyra2020} Alexander V. Turbiner and Juan Carlos Lopez Vieyra, \textit{Particular superintegrability of 3-body (modified) Newtonian gravity}, Modern Physics Letters A, {\bf{35}} 2050185 (2020).

\bibitem{TurbinerVieyra2021} Alexander V. Turbiner and Juan Carlos Lopez Vieyra, \textit{Superintegrability of $(2n + 1)$-body choreographies $n = 1, 2, 3, \ldots , \infty$ on the algebraic lemniscate by Bernoulli
(inverse problem of classical mechanics)}, International Journal of Modern Physics A, {\bf{36}} 2150116 (2021).


\bibitem{Miller2018}  Willard Miller Jr., Alexander V. Turbiner and M. A. Escobar-Ruiz, \textit{The quantum $n$-body problem in dimension $d \geq n -1$: ground state}, J. Phys. A: Math. Theor. {\bf{51}}, 205201 (2018).

\bibitem{AMRC2019} R. Abraham, J. E. Marsden, T. Ratiu and R. Cushman, \textit{Foundation of Mechanics}, CRC Press 2019.

\bibitem{LR89} M. de León and P.R. Rodrigues, \textit{Methods of Differential Geometry in Analytical Mechanics}, North-Holland Mathematics Studies 1989.

\bibitem{Torres2020} G.F. Torres del Castillo, \textit{Differentiable Manifolds: A Theoretical Physics Approach}, second edition, Birkh\"{a}user Basel 2020.

\bibitem{Lee2012} J.M. Lee, \textit{Introduction to Smooth Manifolds}, second edition, Springer 2012.

\bibitem{MR99} J.E. Marsden and T.S. Ratiu, \textit{Introduction to Mechanics and Symmetry, A Basic Exposition of Classical Mechanical Systems}, second edition, Springer 1999.

\bibitem{Lima} E.L. Lima, \textit{Curso de análise, volume 1}, IMPA 2002. 

\bibitem{Rudin} W. Rudin, \textit{Principles of Mathematical Analysis}, third edition, McGraw Hil 1976.

\bibitem{Ross} K. Ross, \textit{Elementary Analysis}, Springer 2013.

\bibitem{BCM2012} P. Broadbridge, C.M. Chanu and W. Miller Jr., \textit{Solutions of Helmholtz and Schrödinger Equations with Side Condition and Nonregular Separation of Variables}, SIGMA {\bf 8} 089 (2012).

\bibitem{KKM2018} E.G. Kalnins, J.M. Kress and W. Miller Jr., \textit{Separation of Variables and Superintegrability: The symmetry of solvable systems}, IOP Publishing Ltd 2018.

\bibitem{Calkin} M.G. Calkin, \textit{Lagrangian and Hamiltonian Mechanics}, World Scientific Publishing 1996.

\bibitem{Rotman2002} J.J. Rotman, \textit{Advanced Modern Algebra}, Prentice Hall 2002.

\bibitem{AKN2006} V. Arnold, V. Kozlov and A. Neishtadt, \textit{Mathematical Aspects of Classical and Celestial Mechanics}, third edition, Springer 2006.

\bibitem{Prykarpatsky99} A. K. Prykarpatsky, \textit{The Nonabelian Liouville-Arnold Integrability by Quadratures Problem: a Symplectic Approach}, Journal of Nonlinear Mathematical Physics {\bf 6}(4), 384-410 (1999).

\bibitem{Azuaje2022} R. Azuaje, \textit{Solutions of The Hamilton Equations for Time-Dependent Hamiltonian Systems by Means of Solvable Lie Algebras of Symmetries}, Rep. Math. Phys. {\bf 89}(2) 221-230 (2022).

\bibitem{Gilmor2008} R. Gilmor, \textit{Lie Groups, Physics and Geometry: An Introduction for Physicists, Engineers and Chemists}, Cambridge University Press 2008.

\bibitem{Fecko2006} M. Fecko, \textit{Differential Geometry and Lie Groups for Physicists}, Cambridge University Press 2006.

\bibitem{Lie} S. Lie, \textit{Vorlesungen \"{u}ber differentialgleichungen mit bekannten infinitesimalen transformationen}, Leipzig, Teubner (1981).

\bibitem{Kozlov} V.V. Kozlov, \textit{The Euler-Jacobi-Lie integrability theorem”}, Regul. Chaotic Dyn. {\bf 18} 329–343 (2013).

\bibitem{CFGR2015} J. F. Cariñena, F. Falceto. J. Grabowski and M. F. Rañada, \textit{Geometry of Lie integrability by quadratures}, J. Phys. A: Math. Theor. {\bf 48} 215206 (2015).

\bibitem{Cetal2016} J. F. Cariñena, F. Falceto and J. Grabowski, \textit{Solvability of a Lie algebra of vector fields implies their integrability by quadratures}, J. Phys. A: Math. Theor., {\bf 49} 425202 (2016).

\bibitem{Das} A. Das, \textit{Integrable Models}, World Scientific 1989.


\bibitem{Gorkov} L. P. Gorkov and I. E. Dzyaloshinskii, \textit{Contribution to the theory of theMott exciton in a strong magnetic field}, Zh. Eksp. Teor. Fiz. {\bf53}, 717–722 (1967) [Sov. Phys. JETP 26, 449–451 (1968)].


\bibitem{Avr78} J.E.~Avron, I.W.~Herbst and B.~Simon, \textit{Separation of Center of Mass in Homogeneous Magnetic Fields}, Ann. Phys. (N.Y.) {\bf{114}},
431 (1978).

\bibitem{MA2013} M. A. Escobar-Ruiz and A. V. Turbiner, \textit{Two charges on a plane in a magnetic field: Special trajectories}, Journal of Mathematical Physics {\bf{54}}, 022901 (2013).



\end{thebibliography}
\end{document}